\title{State Complexity of the Multiples of the Thue-Morse Set}
\author{Émilie Charlier
\institute{University of Liège\\
Belgium} 
\email{echarlier@uliege.be}
\and Célia Cisternino
\institute{University of Liège\\
Belgium} 
\email{ccisternino@uliege.be} 
\and Adeline Massuir
\institute{University of Liège\\
Belgium} 
\email{a.massuir@uliege.be}
}
\theoremstyle{plain}
\newtheorem{theorem}{Theorem}
\newtheorem{lemma}[theorem]{Lemma}
\newtheorem{corollary}[theorem]{Corollary}
\newtheorem{proposition}[theorem]{Proposition}
\theoremstyle{definition}
\newtheorem{definition}[theorem]{Definition}
\newtheorem{example}[theorem]{Example}
\newtheorem{remark}[theorem]{Remark}
\newcommand{\N}{\mathbb{N}}
\newcommand{\rep}{\mathrm{rep}}
\newcommand{\val}{\mathrm{val}}
\newcommand{\A}{\mathcal{A}}
\newcommand{\T}{\mathcal{T}}
\newcommand{\andrm}{\ {\rm and}\ }
\definecolor{bientotlafin}{RGB}{142, 162, 198}
\definecolor{green}{RGB}{31,160,85}
\definecolor{vert}{RGB}{0,255,127}
\begin{document}
\maketitle

\begin{abstract}
The Thue-Morse set $\mathcal{T}$ is the set of those non-negative integers whose binary expansions have an even number of $1$.  The name of this set comes from the fact that its characteristic sequence is given by the famous Thue-Morse word ${\tt abbabaabbaababba\cdots}$, which is the fixed point starting with ${\tt a}$ 
of the word morphism ${\tt a\mapsto ab,b\mapsto ba}$. The numbers in $\mathcal{T}$ are sometimes called the {\em evil numbers}. We obtain an exact formula for the state complexity (i.e.\ the number of states of its minimal automaton) of the multiplication by a constant of the Thue-Morse set  with respect to any integer base $b$ which is a power of $2$. Our proof is constructive and we are able to explicitly provide the minimal automaton of the language of all $2^p$-expansions of the set $m\mathcal{T}$ for any positive integers $m$ and $p$. The used method is general for any $b$-recognizable set of integers. 
As an application, we obtain a decision procedure running in quadratic time for the problem of deciding whether a given $2^p$-recognizable set is equal to some multiple of the Thue-Morse set. 
\end{abstract}

\section{Introduction}

A subset $X$ of $\N$ is said to be {\em $b$-recognizable} if the base-$b$ expansions of the elements of $X$ form a regular language. The famous theorem of Cobham tells us that any non-trivial property of numbers are dependent on the base we choose: the only sets that are $b$-recognizable for all bases $b$ are the finite unions of arithmetic progressions \cite{Cobham:1969}. Inspired by this seminal result, many descriptions of $b$-recognizable sets were given, e.g.\ morphic, algebraic and logical characterizations \cite{Boigelot&Rassart&Wolper:1998,Bruyere&Hansel&Michaux&Villemaire:1994,Cobham:1972}, extensions of these to systems based on a Pisot number \cite{Bruyere&Hansel:1997}, the normalization map \cite{Frougny:1992} or the possible growth functions \cite{Charlier&Rampersad:2011,Eilenberg:1974}. For more on $b$-recognizable sets, we refer to the surveys \cite{Allouche&Shallit:2003,Bruyere&Hansel&Michaux&Villemaire:1994,Charlier:2018,Eilenberg:1974,Frougny&Sakarovitch:2010,Rigo:2014}.

In particular, as mentioned above, these sets have been characterized in terms of logic. More precisely, a subset of $\N$ (and more generally of $\N^d$) is $b$-recognizable if and only if it is definable by a first-order formula of the structure $\langle\N,+,V_b\rangle$ where $V_b$ is the base-dependent functional predicate that associates with a natural $n$ the highest power of $b$ dividing $n$. Since the finite unions of arithmetic progressions are precisely the subsets of $\N$ that are definable by first order formulas in the Presburger arithmetic $\langle\mathbb{N},+\rangle$, this characterization provides us with a logical interpretation of Cobham’s theorem. In addition, this result turned out to be a powerful tool for showing that many properties of $b$-automatic sequences are decidable and, further, that many enumeration problems of $b$-automatic sequences can be described by $b$-regular sequences in the sense of Allouche and Shallit \cite{Allouche&Shallit:1992,Allouche&Shallit:2003,Charlier&Rampersad&Shallit:2012}. 

In the context of Cobham's theorem, the following question is natural 
and has received a constant attention during the last 30 years: given an automaton accepting the language of the base-$b$ expansions of a set $X\subseteq\N$, is it decidable whether $X$ is  a finite union of arithmetic progressions? Several authors gave decision procedures for this problem \cite{Allouche&Rampersad&Shallit:2009,Bruyere&Hansel&Michaux&Villemaire:1994,Honkala:1986,Leroux:2005,Marsault&Sakarovitch:2013}. Moreover, a multidimensional version of this problem was shown to be decidable in a beautiful way based on logical methods \cite{Bruyere&Hansel&Michaux&Villemaire:1994,Muchnik:2003}. 

With any set of integers $X$ is naturally associated an infinite word, which is its characteristic sequence $\chi_X\colon n\mapsto 1$ if $n\in X,\  n\mapsto 0$ otherwise. Thus, to a finite union of arithmetic progressions corresponds an ultimately periodic infinite word. Therefore, the HD0L ultimate periodicity problem consisting in deciding whether a given morphic word (i.e.\ the image under a coding of the fixed point of a morphism) is ultimately periodic is a generalization of the periodicity problem for $b$-recognizable sets mentioned in the previous paragraph. The HD0L ultimate periodicity problem
was shown to be decidable in its full generality \cite{Durand:2013,Mitrofanov:2013}. The proofs rely on return words, primitive substitutions or evolution of Rauzy graphs. However, these methods do not provide algorithms that could be easily implemented and the corresponding time complexity is very high. In addition, they do not allow us to obtain an algorithm for the multidimensional generalization of the periodicity problem, i.e.\ the problem of deciding whether a $b$-recognizable subset of $\N^d$ is definable within the Presburger arithmetic $\langle\N,+\rangle$. Therefore, a better understanding of the inner structure of automata arising from number systems remains a powerful tool to obtain efficient decision procedures. 

The general idea is as follows. Suppose that $\mathcal{L}=\{L_i\colon i\in \N\}$ is a collection of languages and that we want to decide whether some particular language $L$ belongs to $\mathcal{L}$. Now, suppose that we are able to explicitly give a lower bound on the state complexities of the languages in $\mathcal{L}$, i.e.\ for each given $N$, we can effectively produce a bound $B(N)$ such that for all $i> B(N)$, the state complexity of $L_i$ is greater than $N$. Then the announced problem is decidable: if $k$ is the state complexity of the given language $L$, then only the finitely many languages $L_0,\ldots,L_{B(k)}$ have to be compared with $L$.

The state complexity of a $b$-recognizable set (i.e.\ the number of states of the minimal automaton accepting the $b$-expansions of its elements) is closely related to the length
of the logical formula describing this set. Short formulas are crucial in order to produce efficient mechanical proofs by using for example the Walnut software \cite{Mousavi:2015,Shallit:2015}. There are several ways to improve the previous decision procedure. One of them if to use precise knowledge of the stucture of the involved automata. This idea was successfully used in the papers \cite{BMMR,Marsault&Sakarovitch:2013}. 
In \cite{Charlier&Rampersad&Rigo&Waxweiler:2011},  the structure of automata accepting the greedy expansions of $m\N$ for a wide class of non-standard numeration systems, and in particular, estimations of 
 the state complexity of $m\N$ are given. 
Another way of improving this procedure is to have at our disposal the exact state complexities of the languages in $\mathcal{L}$. Finding an exact formula is a much more difficult problem than finding good estimates. However, some results in this direction are known. For instance, it is proved in \cite{Charlier&Rampersad&Rigo&Waxweiler:2011} that for the Zeckendorf numeration system (i.e.\ based on the Fibonacci numbers), the state complexity of $m\N$ is exactly $2m^2$. 
A complete description of the minimal automaton recognizing $m\N$ in any integer base $b$ was given in~\cite{Alexeev:2004} and the state complexity of $m\N$ with respect to the base $b$ is shown to be exactly 
\begin{equation}\label{eq:alexeev}
 \frac{m}{\gcd(m,b^N)}+\sum_{t=0}^{N-1} \frac{b^t}{\gcd(m,b^t)}
\end{equation}
where $N$ is the smallest integer $\alpha$ such that $\frac{m-b^\alpha}{\gcd(m,b^\alpha)}<\frac{m}{\gcd(m,b^{\alpha+1})}$. 

For all the above mentioned reasons, the study of the state complexity of 
$b$-recognizable sets deserves special interest.
In the present work, we propose ourselves to initiate a study of the state complexity of the multiplication by a constant of recognizable subsets $X$ of $\N$. In doing so, we aim at generalizing the previous framework concerning the case $X=\N$ only.  Our study starts with the Thue-Morse set $\T$ of the so-called {\em evil numbers} \cite{Allouche:2015}, i.e.\ the natural numbers whose base-$2$ expansion contains an even number of occurrences of the digit $1$. The characteristic sequence of this set corresponds to the ubiquitous Thue-Morse word ${\tt abbabaabbaababba\cdots}$, which is the fixed point starting with ${\tt a}$ of the morphism ${\tt a\mapsto ab,b\mapsto ba}$. This infinite word is one of the archetypical aperiodic automatic words. 
Therefore, the set $\T$ seems to be a natural candidate to start with. The goal of this work is to provide a complete characterization of the minimal automata recognizing the sets $m\T$ for any multiple $m$ and any base $b$ which is a power of $2$ (other bases are not relevant with the choice of the Thue-Morse set in view of Cobham's theorem).

This paper has the following organization. In Section~\ref{sec:basics}, we recall the background that is necessary to tackle our problem. In Section~\ref{sec:method}, we state our main result and expose the method that will be carried out for its proof. More precisely, we present the steps of our construction of the minimal automaton accepting the base-$2^p$ expansions of the elements of $m\T$ for any positive integers $m$ and $p$. Sections~\ref{sec:aut-T} to~\ref{sec:projected-product} are devoted to build each needed intermediate automata. Thus, at the end of Section~\ref{sec:projected-product}, we are provided with an automaton recognizing the desired language. At each step of the construction, we study the properties of the built automata that will be needed for proving the announced state complexity result. The minimization procedure of the last automaton is handled in Section~\ref{sec:minimization}. This part is the most technical one and it deeply relies on the properties of the intermediate automata proved in the previous sections. Finally, in Section~\ref{sec:perspectives}, we discuss future work and give three related open problems. 
Due to lack of space, this paper does not contain full proofs of our results. Nevertheless, all the missing details can be found in the arXiv platform \cite{arxiv-version}.

\section{Basics}
\label{sec:basics}

In this text, we use the usual definitions and notation (alphabet, letter, word, language, free monoid, automaton, etc.) of formal language theory  \cite{Lothaire1997,Sakarovitch2009}. 
Nevertheless, let us give a few definitions and properties that will be central in this work. The length of a finite word $w$ is denoted by $|w|$ and the number of occurrences of a letter $a$ in $w$ is denoted by $|w|_a$. The empty word is denoted by $\varepsilon$. A {\em regular language} is a language which is accepted by a finite automaton. For $L\subseteq A^*$ and $w\in A^*$, the {\em (left) quotient} of $L$ by $w$ is the language $w^{{-}1}L=\{u\in A^*\colon wu\in L\}$. As is well known, a language $L$ over an alphabet $A$ is regular if and only if it has finitely many quotients, that is, the set of languages $\{w^{{-}1}L\colon w\in A^*\}$ is finite. The {\em state complexity} of a regular language is the number of its quotients. It corresponds to the number of states of its minimal automaton. The following characterization of minimal automata will be used several times in this work: a deterministic finite automaton (or DFA for short) is minimal if and only if it is reduced and accessible. Recall that a DFA is reduced if the languages accepted from distinct states are distinct and that a DFA is accessible if every state can be reached from the initial state. The language accepted from a state $q$ is denoted by $L_q$. Thus, the language accepted by a DFA is the language accepted from its initial state (we always consider automata having a single initial state).

In what follows we will need a notion that is somewhat stronger than that of reduced DFAs. We say that a DFA has {\em disjoint states} if the languages accepted from distinct states are disjoint: for distinct states $p$ and $q$, we have $L_p\cap L_q=\emptyset$. A state $q$ is said to be {\em co-accessible} if $L_q\ne \emptyset$ and, by extension, an automaton is said to be {\em co-accessible} if all its states are co-accessible. Thus, any co-accessible DFA having disjoint states is reduced.

Now, let us give some background on numeration systems. Let $b\in\N_{\ge2}$. The {\em $b$-expansion} of a positive integer $n$, which is denoted by $\rep_b (n)$, is the finite word $c_{\ell{-}1}\cdots c_0$ over the alphabet $A_b=[\![0,b{-}1]\!]$ defined by $n=\sum_{j=0}^{\ell{-}1} c_j b^j,  c_{\ell{-}1}\ne 0$. Note that here and throughout the text, an interval of integers $\{m,m+1,\ldots,n\}$ is denoted by $[\![m,n]\!]$. The {\em $b$-expansion} of $0$ is the empty word: $\rep_b(0)=\varepsilon$. The number $b$ is called the {\em base} of the numeration. Conversely, for a word $w=c_{\ell{-}1}\cdots c_0$ over the alphabet $A_b$, we write $\val_b(w)=\sum_{j=0}^{\ell{-}1} c_j b^j$. Thus we have $\rep_b\colon\N\to A_b^*$ and $\val_b\colon A_b^*\to \N$. For all subsets $X$ of $\N$, we have $\val^{-1}_b(X)=0^*\rep_b(X)$. A subset $X$ of $\N$ is said to be {\em $b$-recognizable} if the language $\rep_b (X)$ is regular. It is of course equivalent to ask that the language $\val_b^{-1}(X)$ is regular. In what follows, we will always consider automata accepting $\val_b^{-1}(X)$ instead of $\rep_b(X)$. The {\em state complexity} of a $b$-recognizable subset $X$ of $\N$ {\em with respect to the base $b$} is the state complexity of the language $\val_b^{-1}(X)$. 

We will need to represent not only natural numbers, but also pairs of natural numbers. If $u=u_1\cdots u_n\in A^*$ and $v=v_1\cdots v_n\in B^*$ are words of the same length $n$, then we use the notation $(u,v)$ to designate the word $(u_1,v_1)\cdots (u_n,v_n)$ of length $n$ over the alphabet $A\times B$. 
For $(m,n)\in\N^2$, we write
\[
	\rep_b(m,n)=(0^{\ell-|\rep_b(m)|}\rep_b(m),0^{\ell-|\rep_b(n)|}\rep_b(n))
\] 
where $\ell=\max\{|\rep_b(m)|,|\rep_b(n)|\}$. Finally, for a subset $X$ of $\N^2$, we write $\val_{b}^{-1}(X)=(0,0)^*\rep_b(X)$.

\section{Main result and method}
\label{sec:method}

The Thue-Morse set, which we denote by $\T$, is the set of all natural numbers whose base-$2$ expansions contain an even number of occurrences of $1$:
\[
	\T= \{ n \in \N \colon |\rep_2 (n)|_1 \in 2\N\}.
\]
Note that the numbers in $\T$ are sometimes called {\em evil} and the numbers in $\N \setminus \T$ are said to be {\em odious} \cite{Allouche:2015}. The set $\T$ is clearly $2$-recognizable. More precisely, it is $2^p$-recognizable for all $p\in\N_{\ge1}$ and is not $b$-recognizable for any other base $b$. For example, an automaton recognizing $\T$ in base $4$ is depicted in the left part of Figure~\ref{fig:aut-TM-2-4-couples}. This is a consequence of the theorem of Cobham. Two positive integers are said to be {\em multiplicatively independent} if their only common integer power is $1$.

\begin{theorem}[Cobham \cite{Cobham:1969}] \ 
\begin{itemize}
\item Let $b,b'$ be two multiplicatively independent bases. Then a subset of $\N$ is both $b$-recognizable and $b'$-recognizable if and only if it is a finite union of arithmetic progressions.
\item Let $b,b'$ be two multiplicatively dependent bases. Then a subset of $\N$ is $b$-recognizable if and only if it is $b'$-recognizable.
\end{itemize}
\end{theorem}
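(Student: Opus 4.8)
The plan is to dispatch the two bullets separately, since the mathematical content is very unevenly spread between them. The \textbf{second bullet} (multiplicatively dependent bases) is the easy one. If $b$ and $b'$ are multiplicatively dependent, then comparing prime factorisations in a relation $b^m=b'^n$ shows there exist an integer $c\ge 2$ and positive integers $k,\ell$ with $b=c^k$ and $b'=c^\ell$. I would then prove the single block-grouping lemma that a set $X\subseteq\N$ is $c$-recognizable if and only if it is $c^k$-recognizable: a $c^k$-digit is exactly a length-$k$ block of $c$-digits read most-significant-first, so that $\val_c^{-1}(X)$ and $\val_{c^k}^{-1}(X)$ are images of one another under a length-preserving block coding, and regular languages are closed under such codings. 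Applying the lemma twice yields $b$-recognizable $\Leftrightarrow$ $c$-recognizable $\Leftrightarrow$ $b'$-recognizable.

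For the \textbf{first bullet} I would first establish the routine implication ``finite union of arithmetic progressions $\Rightarrow$ recognizable in every base''. An arithmetic progression $\{dn+r\colon n\in\N\}$ is $b$-recognizable for every $b$ because a DFA reading digits most-significant-first can track the residue modulo $d$ of the prefix read so far, updating its state by $s\mapsto(bs+c)\bmod d$ on reading digit $c$ and accepting when the residue equals $r$; this uses $d$ states. Closure of regular languages under finite union then covers arbitrary finite unions, and such a set is simultaneously $b$- and $b'$-recognizable for every pair $(b,b')$, irrespective of independence.

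The substance is the converse of the first bullet: if $X$ is both $b$- and $b'$-recognizable with $b,b'$ multiplicatively independent, then $X$ is a finite union of arithmetic progressions, equivalently (as noted in the introduction) its characteristic sequence $\chi_X$ is ultimately periodic. This is Cobham's theorem proper, and my plan follows the classical route through automatic sequences. I would reformulate recognizability as finiteness of the $b$-kernel and of the $b'$-kernel of $\chi_X$, that is, of the sets of subsequences $\bigl(\chi_X(b^i n+r)\bigr)$ and $\bigl(\chi_X(b'^j n+r)\bigr)$, so that at every scale $b^i$ (resp.\ $b'^j$) the sequence $\chi_X$ decomposes into blocks drawn from a fixed finite pool of patterns. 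Multiplicative independence enters through the irrationality of $\log b/\log b'$, which by equidistribution makes the set
\[
	\{\, b^i/b'^j \colon i,j\in\N \,\}
\]
dense in $\mathbb{R}_{>0}$; hence I can pick arbitrarily large $i,j$ with $b^i$ and $b'^j$ arbitrarily close. Comparing the two block decompositions of $\chi_X$ at these nearly-equal but incommensurable scales, and exploiting that both pools of patterns are finite, one forces the patterns to agree with shifts of one another over long ranges, from which ultimate periodicity of $\chi_X$ is extracted.

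The \textbf{main obstacle} is precisely this last comparison: converting ``finitely many patterns at each of two arbitrarily close incommensurable scales'' into genuine periodicity is the technical heart of Cobham's theorem, where essentially all the work lies, the density fact and the kernel finiteness being comparatively soft inputs. I expect to need a delicate pigeonhole and recurrence argument controlling how the finite pool of $b$-blocks may overlap the finite pool of $b'$-blocks. I would also guard against the tempting but false shortcut that a recognizable set with bounded gaps must be periodic: the Thue-Morse set $\T$ is itself $2$-recognizable with bounded gaps yet not ultimately periodic, which is exactly why both bases must be used in an essential way and why no argument invoking a single base can possibly succeed.
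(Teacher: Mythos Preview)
The paper does not prove this theorem. It is stated with a citation to Cobham's original 1969 paper and used as a black box; there is no proof to compare your proposal against.

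That said, your outline is the standard shape of a proof. A few remarks. For the second bullet, the reduction to a common base $c$ with $b=c^k$ and $b'=c^\ell$ is the right move, but the passage from ``$b^m=b'^n$ for some $m,n$'' to ``$b$ and $b'$ are both powers of a common integer'' is itself a small lemma on prime factorisations that you should state and prove rather than assert. The block-grouping argument is correct once you work with $\val_b^{-1}(X)=0^*\rep_b(X)$ (as the paper does) so that leading zeros cause no trouble; if you phrase it for $\rep_b(X)$ without leading zeros you will need to patch the initial partial block.

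For the first bullet, the easy implication is fine. For the hard implication you have correctly identified the architecture (finiteness of the $b$- and $b'$-kernels, density of $\{b^i/b'^j\}$ from irrationality of $\log b/\log b'$, comparison at two nearly equal incommensurable scales) and, more importantly, you have correctly located where the real work is: turning ``finitely many block types at two close scales'' into ultimate periodicity. Your proposal is honest that this step is the obstacle and does not actually carry it out; as written it is a plan, not a proof. If you intend to complete it, the cleanest modern route is via syndeticity: first show that a set simultaneously $b$- and $b'$-recognizable is syndetic (has bounded gaps) unless it is finite, and then run the overlap argument on a single infinite kernel sequence to force a period. Your closing caveat about $\T$ is well taken and shows you understand why the argument must genuinely use both bases.
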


We introduce the following notation: for $X\in\{T,B\}$ and $n \in \N$, we define
\[
	X_n = \begin{cases}
			X & \text{if } n \in \T \\
			\overline{X} & \text{else}  
			\end{cases}
\]
where $\overline{T}=B$ and $\overline{B}=T$. It is easily seen that for each $p\in\N_{\ge1}$, the language $\val^{-1}_{2^p}(\T)$ is accepted by the DFA $(\{T,B\},T,T,A_{2^p},\delta)$ where for all $X\in\{T,B\}$ and all $a\in A_{2^p}$, $\delta(X,a)=X_a$. 

The following proposition is well known; for example see~\cite{Bruyere&Hansel&Michaux&Villemaire:1994}. 

\begin{proposition}
Let $b\in\N_{\ge 2}$ and $m\in\N$. If $X$ is $b$-recognizable, then so is $mX$. Otherwise stated, multiplication by a constant preserves $b$-recognizability.
\end{proposition}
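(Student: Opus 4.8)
The plan is to reduce the statement to two standard ingredients: that the graph of multiplication by $m$ is a $b$-recognizable subset of $\N^2$, and that $b$-recognizability is preserved under intersection and under projection onto one coordinate. Throughout I work with the languages $\val_b^{-1}(\cdot)$ rather than with $\rep_b(\cdot)$, as suggested in Section~\ref{sec:basics}; recall that adding leading zeros is harmless since $\val_b^{-1}(X)=0^*\rep_b(X)$. The degenerate cases $m=0$ and $X=\emptyset$ are immediate (then $mX$ equals $\{0\}$ or $\emptyset$), so I assume $m\ge 1$ and $X\ne\emptyset$.

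First I would build a finite automaton recognizing the graph $G_m=\{(x,mx)\colon x\in\N\}$, that is, recognizing $\val_b^{-1}(G_m)$ over the alphabet $A_b\times A_b$. The natural way is to read a padded pair $(\rep_b(x),\rep_b(mx))$ from its \emph{least significant} digit while keeping a carry $c$ as the state: on reading a pair $(a,d)$ the transition is allowed exactly when $d\equiv ma+c \pmod b$, in which case the new carry is $c'=\lfloor(ma+c)/b\rfloor$, the initial carry is $0$, and a run is accepting when it ends with carry $0$. The crucial point is that this state set is \emph{finite}: since $a\le b-1$, if $0\le c\le m$ then $c'\le\lfloor(m(b-1)+m)/b\rfloor=m$, so by induction the carry never leaves $\{0,1,\ldots,m\}$. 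This yields a finite automaton reading pairs from the right; as regular languages are closed under reversal, $\val_b^{-1}(G_m)$ is regular, i.e.\ $G_m$ is $b$-recognizable.

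Next I would combine this with the hypothesis on $X$. Running the minimal DFA of $\val_b^{-1}(X)$ on the first track (and ignoring the second) recognizes the ``cylinder'' $C_X=\{(u,v)\colon \val_b(u)\in X\}$ of equal-length pairs, which is regular. Intersecting, $\val_b^{-1}(G_m)\cap C_X$ accepts precisely the padded pairs $(u,v)$ with $\val_b(u)\in X$ and $\val_b(v)=m\,\val_b(u)$, that is, $\val_b^{-1}(\{(x,mx)\colon x\in X\})$; regular languages being closed under intersection, this language is regular. Finally, applying the letter-to-letter morphism $\pi\colon(a,d)\mapsto d$ projects onto the second coordinate and produces exactly $\val_b^{-1}(mX)=0^*\rep_b(mX)$ (for any $v$ with $\val_b(v)\in mX$ one recovers a matching first track by padding $\rep_b(x)$ with leading zeros, which is possible because $mx\ge x$ forces $|v|\ge|\rep_b(x)|$); since a morphic image of a regular language is regular, $\val_b^{-1}(mX)$ is regular and $mX$ is $b$-recognizable.

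All the computations are routine, so the only genuine point requiring care is the \emph{finiteness of the carry-tracking automaton}: one must check that the bound $\{0,\ldots,m\}$ is preserved by every transition, including in the padding region where $a=0$ and the carry must drain back to $0$. The rest reduces to closure of regular languages under reversal, intersection, and morphic image, the remaining care being only in fixing the padding conventions so that the projection delivers exactly $0^*\rep_b(mX)$ and nothing more. Alternatively, the whole statement follows in one line from the logical characterization recalled in the introduction: $mX=\{y\colon \exists x\,(x\in X\wedge y=x+\cdots+x)\}$ with $m$ copies of $x$ is definable in $\langle\N,+,V_b\rangle$, hence $b$-recognizable.
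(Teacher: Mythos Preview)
Your argument is correct. Note, however, that the paper does not actually prove this proposition: it records it as well known and refers to \cite{Bruyere&Hansel&Michaux&Villemaire:1994}. That said, the construction you outline is exactly the general template that the paper later instantiates for $X=\T$ in Sections~\ref{sec:method}--\ref{sec:projected-product}: build an automaton for the graph of multiplication by $m$, form the product with an automaton recognizing $X\times\N$, and project onto the second coordinate. The only cosmetic difference is that you track carries from the least significant digit and then invoke closure under reversal, with state set $\{0,\ldots,m\}$, whereas the paper's $\A_{m,b}$ in Section~\ref{sec:aut-m} reads most-significant-digit first on the $m$ states $[\![0,m{-}1]\!]$ via the rule $\delta_{m,b}(i,(d,e))=j\iff bi+e=md+j$. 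Your closing logical one-liner is likewise the standard argument behind the cited reference.
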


In particular, for any positive integers $m$ and $p$, the set $m\T$ is $2^p$-recognizable. The aim of this work is to prove the following result.

\begin{theorem} \label{thm:main}
Let $m$ and $p$ be positive integers. Then the state complexity of $m\T$ with respect to the base $2^p$ is equal to 
\[
	2k+\left\lceil \frac zp\right\rceil
\]
where $z$ is the highest power of $2$ dividing $m$ and $k$ is the odd part of $m$, i.e.\ $z$ and $k$ are the unique integers such that $m=k2^z$ with $k$ odd.
\end{theorem}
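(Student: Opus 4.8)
The plan is to build an automaton for $\mathrm{val}_{2^p}^{-1}(m\T)$ as a projection of a product, and then to minimize it through a careful analysis of its accessible and co-accessible states. Writing $b=2^p$, a word $w$ over $A_b$ lies in $\mathrm{val}_b^{-1}(m\T)$ if and only if $\mathrm{val}_b(w)=mn$ for some $n\in\T$. I would therefore start from the two-state DFA for $\T$ recalled above, together with a \emph{multiplication-by-$m$} automaton reading pairs of digits $(a,c)\in A_b\times A_b$ most significant digit first and tracking the quantity $r=\mathrm{val}_b(\text{$v$-prefix})-m\,\mathrm{val}_b(\text{$n$-prefix})$, with transition $r\mapsto br+c-ma$ and initial and final value $0$. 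Taking the synchronized product of these two automata (the $\T$-automaton reading the $n$-component $a$), keeping as final exactly the states with $r=0$ and Thue-Morse parity $T$, then projecting onto the $v$-component $c$ and determinizing, yields an automaton accepting $\mathrm{val}_b^{-1}(m\T)$.

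Now write $m=k2^z$ with $k$ odd. The heart of the argument is the odd case $z=0$, for which I would show that the minimal automaton has exactly $2k$ states. The essential simplification is that for a fixed value $v$ the integer $n=v/k$ is unique when it exists, so the relation $v=kn$ is a partial function; consequently the determinization of the projection never genuinely branches and the reachable states remain, in essence, the singletons $(r,X)$ where $r$ is a co-accessible carry and $X\in\{T,B\}$ a parity. I would then determine the finite set of co-accessible carries $r$, which forms an interval of integers exactly as in the analysis underlying Alexeev's formula~\eqref{eq:alexeev} for $m\N$, and verify that, once paired with the two parities, precisely $2k$ of the pairs $(r,X)$ are simultaneously accessible and co-accessible. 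This is the base case $2k+\lceil 0/p\rceil=2k$.

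Finally, I would treat the factor $2^z$ through the identity $m\T=2^z(k\T)$: a word $w$ is accepted if and only if $2^z\mid\mathrm{val}_b(w)$ and the shifted quotient lies in $\mathrm{val}_b^{-1}(k\T)$. Since $b=2^p$, dividing by $2^z$ shifts the binary digits by $z=p\lfloor z/p\rfloor+(z\bmod p)$ positions, and this misalignment between the binary shift and the base-$2^p$ blocking forces a chain of exactly $\lceil z/p\rceil$ additional \emph{transient} states that feed into the $2k$-state recurrent part coming from $k\T$. To conclude I would invoke the criterion recalled in Section~\ref{sec:basics}: the constructed automaton is minimal once it is shown to be accessible, co-accessible, and reduced, and reducedness is obtained by checking that it has disjoint states. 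The main obstacle is precisely this last, combinatorial step: proving that no transient state is equivalent to a recurrent one, and that the $2k$ recurrent pairs $(r,X)$ are pairwise inequivalent, requires the full structural description of the intermediate automata, which is the technical content deferred to Section~\ref{sec:minimization}.
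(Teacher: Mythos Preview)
Your overall construction matches the paper's: a product of the multiplication-by-$m$ automaton with the two-state Thue--Morse automaton, followed by projection onto the output component. The handling of the odd case is also correct and coincides with Proposition~\ref{prop:m-odd}: when $m=k$ is odd, the projected automaton has disjoint states and therefore is already minimal with $2k$ states.

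The gap is in the even case. Your plan to prove minimality by showing that the final $(2k+\lceil z/p\rceil)$-state automaton has disjoint states cannot work, because that automaton does \emph{not} have disjoint states when $z>0$. A minimal example is $m=2$, $p=1$: the minimal automaton of $\val_2^{-1}(2\T)$ has three states, and both the initial state and the single ``extra'' state accept the word $0$, so their languages intersect. The paper never claims disjointness at this stage; instead Section~\ref{sec:reduction2} exhibits, for each pair of classes, an explicit word accepted from one and not the other. Distinguishability is strictly weaker than disjointness and is all that minimality requires, but it has to be proved directly.

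A second, related issue is your description of the $\lceil z/p\rceil$ additional states as a \emph{transient chain feeding into} the $2k$-state core. In the same example $m=2$, $p=1$, the three states form a single strongly connected component; the extra state is not transient at all. More generally, the paper's classes $\Gamma_\beta$ are reached from the classes $[(j,X)]$ and lead back to them, so the picture of a one-way chain prefixed to the odd-part automaton is inaccurate. The paper avoids this by working directly on the $2m$-state projection $\Pi(\A_{m,2^p}\times\A_{\T,2^p})$ and describing its Nerode classes explicitly (the families $[(j,X)]$ and $\Gamma_\beta$ of Section~\ref{sec:minimization}), rather than through the factorization $m\T=2^z(k\T)$. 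Your factorization idea is a legitimate alternative route, but it would still require a genuine distinguishability argument in place of the disjoint-states shortcut, and a correct structural description of how the extra $\lceil z/p\rceil$ states sit inside the automaton.
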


Our proof of Theorem~\ref{thm:main} is constructive. In order to describe the minimal DFA of $\val_{2^p}^{-1}(m\T)$, we will successively construct  several automata. First, we build a DFA $\A_{\T,2^p}$ accepting the language $\val_{2^p}^{-1}(\T\times \N)$. Then we build a DFA $\A_{m,b}$ accepting the language $\val_b^{-1}\big(\{(n,mn)\colon n\in \N\}\big)$.
Note that we do the latter step for any integer base $b$ and not only for powers of $2$.
Next, we consider the product automaton $\A_{m,2^p}\times\A_{\T,2^p}$. This DFA accepts the language $\val_{2^p}^{-1}\big(\{(t,mt)\colon t \in \T\}\big)$. Finally, a finite automaton $\Pi(\A_{m,2^p}\times\A_{\T,2^p})$ accepting $\val_{2^p}^{-1}(m\T)$ is obtained by projecting the label of each transition in $\A_{m,2^p}\times\A_{\T,2^p}$ onto its second component.
At each step of our construction, we check that the automaton under consideration is minimal (and hence deterministic) and the ultimate step precisely consists in a minimization procedure.

From now on, we fix some positive integers $m$ and $p$. We also let $z$ and $k$ be the unique integers such that $m=k2^z$ with $k$ odd.

\section{The automaton $\A_{\T,2^p}$}
\label{sec:aut-T}

In this section, we construct a DFA  $\A_{\T,2^p}$ accepting $\val^{-1}_{2^p}(\T\times\N)$.
This DFA is a modified version of the automaton accepting $\val^{-1}_{2^p}(\T)$ defined in the previous section. Namely, we replace each transition labeled by $a\in A_{2^p}$ by $2^p$ copies of itself labeled by $(a,b)$, for each $b\in A_{2^p}$. Formally, 
\[
	\A_{\T,2^p}=(\{T,B\},T,T,A_{2^p}\times A_{2^p},\delta_{\T,2^p})
\]
where, for all $X\in\{T,B\}$ and all $a,b\in A_{2^p}$, we have $\delta_{\T,2^p}(X,(a,b))=X_a$. (The letters $B$ and $T$ were not chosen arbitrarily: $B$ is for “bottom“ whereas the letter $T$ refers to both “top” and “Thue-Morse”.) The automaton $\A_{\T,4}$ (i.e.\ for $p=2$) is depicted in the right part of Figure~\ref{fig:aut-TM-2-4-couples}.

\begin{figure}[htb]
\vspace*{-1.3cm}
\begin{minipage}[c]{0.35\linewidth}
\centering
\begin{tikzpicture}
\tikzstyle{every node}=[shape=circle, fill=none, draw=black,
minimum size=20pt, inner sep=2pt,scale=0.8]
\node(1) at (0,0) {$T$};
\node(2) at (0,-1.5) {$B$};
\tikzstyle{every node}=[shape=circle, fill=none, draw=black,
minimum size=15pt, inner sep=2pt,scale=0.8]
\node(2f) at (0,0) {};
\tikzstyle{every path}=[color=black, line width=0.5 pt]
\tikzstyle{every node}=[shape=circle, minimum size=5pt, inner sep=2pt]
\draw [->] (-1,0) to node {} (1); 
\draw [->] (1) to [loop above] node [above=-0.15cm,scale=0.8] {$0,3$} (1);
\draw [->] (2) to [loop below] node [below=-0.15cm,scale=0.8] {$0,3$} (2);
\draw [->] (1) to [bend right=30] node [left,scale=0.8] {$1,2$} (2);
\draw [->] (2) to [bend right=30] node [right,scale=0.8] {$1,2$} (1);
\end{tikzpicture}
\end{minipage}
\begin{minipage}[c]{0.6\linewidth}
\centering
\begin{tikzpicture}
\tikzstyle{every node}=[shape=circle, fill=none, draw=black,
minimum size=20pt, inner sep=2pt,scale=0.8]
\node(1) at (0,0) {$T$};
\node(2) at (0,-1.75) {$B$};
\tikzstyle{every node}=[shape=circle, fill=none, draw=black,
minimum size=15pt, inner sep=2pt,scale=0.8]
\node(2f) at (0,0) {};
\tikzstyle{every path}=[color=black, line width=0.5 pt]
\tikzstyle{every node}=[shape=circle, minimum size=5pt, inner sep=2pt]
\draw [->] (-1,0) to node {} (1); 
\draw [->] (1) to [loop above] node [above=-1.3cm,scale=0.8] {\begin{tabular}{c}
$(0,0),(0,1),(0,2),(0,3)$ \\
$(3,0),(3,1),(3,2),(3,3)$ \\
\end{tabular}} (1);
\draw [->] (2) to [loop below] node [below=-1.3cm,scale=0.8] {\begin{tabular}{c}
$(0,0),(0,1),(0,2),(0,3)$ \\
$(3,0),(3,1),(3,2),(3,3)$ \\
\end{tabular}} (2);
\draw [->] (1) to [bend right=30] node [left=-0.1cm,scale=0.8] {\begin{tabular}{c}
$(1,0),(1,1),(1,2),(1,3)$ \\
$(2,0),(2,1),(2,2),(2,3)$ \\
\end{tabular}} (2);
\draw [->] (2) to [bend right=30] node [right=-0.1cm,scale=0.8] {\begin{tabular}{c}
$(1,0),(1,1),(1,2),(1,3)$ \\
$(2,0),(2,1),(2,2),(2,3)$ \\
\end{tabular}} (1);
\end{tikzpicture}
\end{minipage}
\vspace*{-1.5cm}
\caption{The minimal automaton recognizing the Thue-Morse set in base $4$ (left) and the automaton $\A_{\T,4}$ (right).}
\label{fig:aut-TM-2-4-couples}
\end{figure}
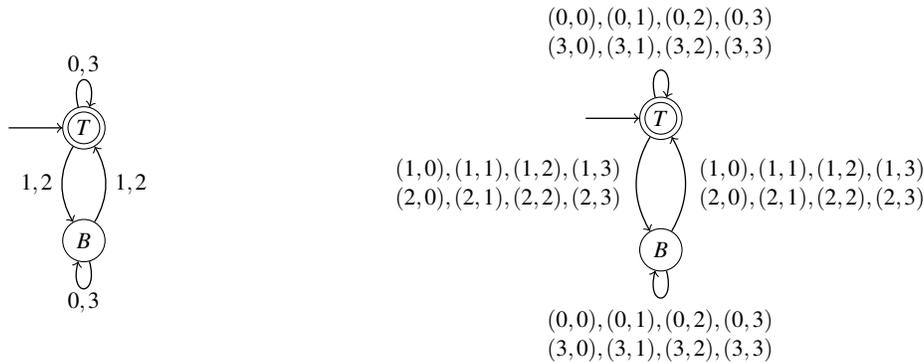

Proofs of the following two lemmas are easy verifications. 

\begin{lemma}
The automaton $\A_{\T,2^p}$ is complete, accessible, co-accessible and has disjoint states. In particular, it is the minimal automaton of $\val_{2^p}^{-1}(\T\times \N)$.
\end{lemma}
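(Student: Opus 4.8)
The plan is to establish the four asserted properties one at a time, and then deduce minimality from the criterion recalled in Section~\ref{sec:basics}. Completeness is immediate, since $\delta_{\T,2^p}(X,(a,b))=X_a$ is defined for every state $X\in\{T,B\}$ and every letter $(a,b)\in A_{2^p}\times A_{2^p}$. Accessibility and co-accessibility are almost as quick: the state $T$ is initial, and reading any letter $(a,b)$ with $a\notin\T$ (for example $(1,0)$, as $\rep_2(1)=1$ contains an odd number of $1$'s) sends $T$ to $B$, so $B$ is accessible; moreover $\varepsilon\in L_T$ because $T$ is final, and the same letter $(1,0)$ sends $B$ back to $T$, so that $(1,0)\in L_B$, giving co-accessibility of both states.

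The substance of the lemma lies in the disjoint-states property, and it rests on a single observation about how base-$2^p$ digits decompose in base $2$. For a word $u=a_1\cdots a_n$ over $A_{2^p}$, the base-$2$ expansion of $\val_{2^p}(u)$ is, up to leading zeros, the concatenation of the $p$-bit encodings of the digits $a_1,\ldots,a_n$; in particular $|\rep_2(\val_{2^p}(u))|_1=\sum_{i=1}^n|\rep_2(a_i)|_1$. Reading a pair $(u,v)$ from $T$ toggles the current state precisely at those positions $i$ for which $a_i\notin\T$, i.e.\ for which $|\rep_2(a_i)|_1$ is odd; the number of such toggles therefore has the same parity as $\sum_i|\rep_2(a_i)|_1$. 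I would conclude that $(u,v)\in L_T$ iff $\val_{2^p}(u)\in\T$, while reading from $B$ yields the opposite parity, so $(u,v)\in L_B$ iff $\val_{2^p}(u)\notin\T$. The languages $L_T$ and $L_B$ are thus governed by complementary conditions on the first component, whence $L_T\cap L_B=\emptyset$.

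This same computation identifies the accepted language as $L(\A_{\T,2^p})=L_T=\val_{2^p}^{-1}(\T\times\N)$. To finish, I would invoke the two facts recalled in Section~\ref{sec:basics}: a co-accessible DFA with disjoint states is reduced, and a reduced accessible DFA is minimal. Since $\A_{\T,2^p}$ has been shown to be accessible, co-accessible, and to have disjoint states, it is reduced, and being accessible it is the minimal automaton of $\val_{2^p}^{-1}(\T\times\N)$.

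I expect the only real content to be the parity bookkeeping of the disjoint-states step, and even there the argument is short because the automaton has only two states; everything else follows directly from the definitions, in line with the author's comment that the verification is easy.
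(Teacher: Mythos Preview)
Your proof is correct and matches what the paper intends: the authors do not give a proof but simply state that it is an easy verification, and your argument is precisely the expected elaboration (indeed, your parity computation for the disjoint-states step is essentially the content of the companion Lemma~\ref{lem:transitionsTM}, which the paper also leaves as an easy check).
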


\begin{lemma}\label{lem:transitionsTM}
For all $X \in \{T,B\}$ and $(u,v)\in (A_{2^p}\times A_{2^p})^*$, we have 
\[
	\delta_{\T,2^p}(X,(u,v))=X_{\val_{2^p}(u)}.
\]
\end{lemma}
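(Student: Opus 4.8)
The plan is to proceed by induction on the length $\ell$ of the word $(u,v)$. For the base case $\ell = 0$, the word is $(\varepsilon,\varepsilon)$ and reading it leaves the state unchanged, so $\delta_{\T,2^p}(X,(\varepsilon,\varepsilon)) = X$; on the other hand $\val_{2^p}(\varepsilon) = 0$ and, since $0$ has no occurrence of the digit $1$ in base $2$, we have $0 \in \T$ and hence $X_0 = X$, as required.

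For the inductive step, I would factor a word of length $\ell + 1$ as $(u,v) = (u',v')(a,b)$ with $|(u',v')| = \ell$ and $(a,b) \in A_{2^p}\times A_{2^p}$, so that $u = u'a$. Using that $\delta_{\T,2^p}$ extends the one-letter transitions and that $\delta_{\T,2^p}(Y,(a,b)) = Y_a$, the induction hypothesis gives
\[
	\delta_{\T,2^p}(X,(u,v)) = \delta_{\T,2^p}\big(X_{\val_{2^p}(u')},(a,b)\big) = \big(X_{\val_{2^p}(u')}\big)_a.
\]
Writing $n = \val_{2^p}(u')$ and observing that $\val_{2^p}(u'a) = 2^p n + a$, the whole statement reduces to the single identity $(X_n)_a = X_{2^p n + a}$ for every $n \in \N$ and every digit $a \in A_{2^p}$.

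The crux of the argument, and the only place where the hypothesis that the base is a power of $2$ is used, is this last identity, which I would deduce from an additivity property of the parity of the number of $1$'s. For $n \in \N$, set $\pi(n) = |\rep_2(n)|_1 \bmod 2$, so that $n \in \T$ if and only if $\pi(n) = 0$, and note that the operation $Y \mapsto Y_a$ flips the state exactly when $\pi(a) = 1$. Since $a \in A_{2^p}$ satisfies $a < 2^p$, its base-$2$ expansion fits in the $p$ lowest bit positions, whereas multiplying $n$ by $2^p$ shifts its base-$2$ expansion up by $p$ positions; these two blocks do not overlap, so no carry occurs and $|\rep_2(2^p n + a)|_1 = |\rep_2(n)|_1 + |\rep_2(a)|_1$. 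Reducing modulo $2$ yields $\pi(2^p n + a) \equiv \pi(n) + \pi(a) \pmod 2$. Consequently, starting from $X$ and flipping first according to $\pi(n)$ (which yields $X_n$) and then according to $\pi(a)$, the state is flipped a number of times whose parity equals $\pi(n) + \pi(a) \equiv \pi(2^p n + a) \pmod 2$; this is exactly the state $X_{2^p n + a}$. This establishes $(X_n)_a = X_{2^p n + a}$ and closes the induction. I expect the only genuinely substantive point to be this \emph{no-carry} observation, which identifies base-$2^p$ digits with blocks of $p$ base-$2$ digits; everything else is routine bookkeeping.
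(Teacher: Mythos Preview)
Your proof is correct. The paper itself does not give a proof of this lemma, stating only that it is an ``easy verification''; your induction on the length of $(u,v)$, reducing to the identity $(X_n)_a = X_{2^p n + a}$ via the no-carry observation $|\rep_2(2^p n + a)|_1 = |\rep_2(n)|_1 + |\rep_2(a)|_1$ for $0\le a<2^p$, is exactly the routine verification the authors have in mind.
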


\section{The automaton $\A_{m,b}$}
\label{sec:aut-m}

In this section, we consider an arbitrary integer base $b$. Let 
\[
	\A_{m,b}=([\![0,m{-}1]\!],0,0,A_b\times A_b,\delta_{m,b})
\]
where the (partial) transition function $\delta_{m,b}$ is defined as follows: for each $i,j\in[\![0,m{-}1]\!]$ and each $d,e\in A_b$, we set
\[
	\delta_{m,b}(i,(d,e))=j  \iff  bi + e = md + j.
\]
This DFA accepts the language $\val^{-1}_b(\{(n,mn)\colon n \in \N \}$. We refer the interested reader to \cite{Waxweiler2009}. 
For example, the automaton $\A_{6,4}$ is depicted in Figure~\ref{fig:A-6,4}. 

\begin{figure}[htb]
\centering
\begin{tikzpicture}[scale=0.7]
\tikzstyle{every node}=[shape=circle, fill=none, draw=black,
minimum size=30pt, inner sep=2pt,scale=0.7]
\node(0) at (0,6.5) {$0$};
\node(1) at (3.5,6.5) {$1$};
\node(2) at (7,6.5) {$2$};
\node(3) at (10.5,6.5) {$3$};
\node(4) at (14,6.5) {$4$};
\node(5) at (17.5,6.5) {$5$};

\tikzstyle{every node}=[shape=circle, fill=none, draw=black,
minimum size=25pt, inner sep=2pt,scale=0.7]
\node at (0,6.5) {};

\tikzstyle{etiquettedebut}=[very near start,rectangle,fill=black!20,scale=0.8]
\tikzstyle{etiquettemilieu}=[midway,rectangle,fill=black!20,scale=0.8]
\tikzstyle{every path}=[color=black, line width=0.5 pt,scale=0.8]
\tikzstyle{every node}=[shape=circle, minimum size=5pt, inner sep=2pt,scale=0.8]

\draw [->] (-1.5,8.1) to node {} (0); 
\draw [->] (0) to [loop above] node [left,rectangle,fill=black!20,scale=0.8] {$(0,0)$} (0);
\draw [->] (1) to [loop below] node [left,rectangle,fill=black!20,scale=0.8] {$(1,3)$} (1);
\draw [->] (2) to [loop left] node [left,pos=0.2,rectangle,fill=black!20,scale=0.8] {$(1,0)$} (2);
\draw [->] (3) to [loop below] node [right,pos=0.3,rectangle,fill=black!20,scale=0.8] {$(2,3)$} (3);
\draw [->] (4) to [loop above] node [left,rectangle,fill=black!20,scale=0.8] {$(2,0)$} (4);
\draw [->] (5) to [loop above] node [left,rectangle,fill=black!20,scale=0.8] {$(3,3)$} (5);
\draw [->] (0) to [bend left=17] node [sloped,etiquettemilieu] {$(0,1)$} (1);
\draw [->] (0) to [bend left=30] node [sloped,etiquettemilieu] {$(0,2)$} (2);
\draw [->] (0) to [bend left=45] node [sloped,etiquettemilieu] {$(0,3)$} (3);
\draw [->] (1) to [bend left=40] node [sloped,etiquettemilieu] {$(0,0)$} (4);
\draw [->] (1) to [bend left=50] node [sloped,etiquettemilieu] {$(0,1)$} (5);
\draw [->] (1) to [bend left=17] node [sloped,etiquettemilieu] {$(1,2)$} (0);
\draw [->] (2) to [bend left=17] node [sloped,etiquettemilieu] {$(1,1)$} (3);
\draw [->] (2) to [bend left=30] node [sloped,etiquettemilieu] {$(1,2)$} (4);
\draw [->] (2) to [bend left=40] node [sloped,etiquettemilieu] {$(1,3)$} (5);
\draw [->] (3) to [bend left=40] node [sloped,etiquettemilieu] {$(2,0)$} (0);
\draw [->] (3) to [bend left=30] node [sloped,etiquettemilieu] {$(2,1)$} (1);
\draw [->] (3) to [bend left=17] node [sloped,etiquettemilieu] {$(2,2)$} (2);
\draw [->] (4) to [bend left=17] node [sloped,etiquettemilieu] {$(2,1)$} (5);
\draw [->] (4) to [bend left=50] node [sloped,etiquettemilieu] {$(3,2)$} (0);
\draw [->] (4) to [bend left=40] node [sloped,etiquettemilieu] {$(3,3)$} (1);
\draw [->] (5) to [bend left=45] node [sloped,etiquettemilieu] {$(3,0)$} (2);
\draw [->] (5) to [bend left=35] node [sloped,etiquettemilieu] {$(3,1)$} (3);
\draw [->] (5) to [bend left=17] node [sloped,etiquettemilieu] {$(3,2)$} (4);
\end{tikzpicture}
\vspace*{-0.5cm}
\caption{The automaton $\A_{6,4}$ accepts the language $\val_4^{-1}\big(\{(n,6n)\colon n\in \N\}\big)$.}
\label{fig:A-6,4}
\end{figure}

Note that the automaton $\A_{m,b}$ is not complete (see Remark~\ref{rem:unicitelettrepremcomp})
and has a loop labeled by $(0,0)$ on the initial state $0$.

\begin{remark}
\label{rem:unicitelettrepremcomp}
For each $i \in [\![0,m{-}1]\!]$ and $e \in A_b$, there exist unique $d \in A_b$ and $j \in [\![0,m{-}1]\!]$ such that $\delta_{m,b}(i,(d,e))=j$. 
\end{remark}

\begin{lemma} \label{lem:transitionsAmb}
For $i,j\in [\![0,m{-}1]\!]$ and $(u,v) \in (A_b\times A_b)^*$, we have
\[
	\delta_{m,b}(i,(u,v))=j \iff b^{|(u,v)|}\, i + \val_b(v) = m\, \val_b(u) + j.
\]
\end{lemma}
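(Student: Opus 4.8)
The plan is to prove Lemma~\ref{lem:transitionsAmb} by induction on the length $n=|(u,v)|$ of the input word. The base case $n=0$ is immediate: here $u=v=\varepsilon$, so $\val_b(u)=\val_b(v)=0$ and $b^0=1$, and the equivalence reduces to $\delta_{m,b}(i,(\varepsilon,\varepsilon))=j \iff i=j$, which holds since reading the empty word leaves the state unchanged.

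For the inductive step, I would write a word of length $n+1$ by peeling off its \emph{last} letter, say $(u,v)=(u'd,v'e)$ with $(u',v')\in(A_b\times A_b)^n$ and $d,e\in A_b$. By definition of the extended transition function, $\delta_{m,b}(i,(u'd,v'e))=\delta_{m,b}\big(\delta_{m,b}(i,(u',v')),(d,e)\big)$. Setting $j'=\delta_{m,b}(i,(u',v'))$, the induction hypothesis gives $b^{n}\,i+\val_b(v')=m\,\val_b(u')+j'$, and the single-letter definition gives $\delta_{m,b}(j',(d,e))=j \iff bj'+e=md+j$. The task is then to combine these two relations and show they are together equivalent to $b^{n+1}\,i+\val_b(v'e)=m\,\val_b(u'd)+j$. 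Here I would use the elementary identities $\val_b(u'd)=b\,\val_b(u')+d$ and $\val_b(v'e)=b\,\val_b(v')+e$: multiplying the induction-hypothesis equation by $b$ yields $b^{n+1}i+b\,\val_b(v')=mb\,\val_b(u')+bj'$, and substituting $bj'=md+j-e$ from the single-letter relation produces exactly the desired equality after rearranging.

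One point deserving care is that $\delta_{m,b}$ is only a \emph{partial} function, so the equivalence must be read as asserting that the transition is defined and equals $j$ precisely when the displayed arithmetic equation admits the relevant witnesses. By Remark~\ref{rem:unicitelettrepremcomp}, for each state $j'$ and each second-component digit $e$ there is a unique pair $(d,j)$ realizing a transition, which guarantees that the witnesses $d$ and $j'$ threaded through the induction are well defined and unique; this is what keeps the partiality from causing trouble and ensures the two directions of the biconditional line up.

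I expect the only real obstacle to be purely bookkeeping: tracking which letter is removed (last rather than first, to match the place-value weighting $b^{|(u,v)|}$) and correctly carrying the factor of $b$ through the substitution so that the powers $b^n$ and $b^{n+1}$ and the digit contributions $d$, $e$ all land in the right place. There is no conceptual difficulty — the statement is a direct Horner-style unfolding of the single-step definition $bi+e=md+j$ — so the argument is a routine, if slightly fussy, induction.
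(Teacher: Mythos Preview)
Your proof is correct and follows exactly the approach the paper indicates: a straightforward induction on $n=|(u,v)|$, peeling off the last letter and combining the induction hypothesis with the single-step definition via the Horner identities $\val_b(u'd)=b\,\val_b(u')+d$ and $\val_b(v'e)=b\,\val_b(v')+e$. The only minor remark is that your appeal to Remark~\ref{rem:unicitelettrepremcomp} is not quite the right tool for the backward direction (there one should simply check that the intermediate value $j'=b^{n}i+\val_b(v')-m\,\val_b(u')$ lies in $[\![0,m{-}1]\!]$, which follows from the derived relation $bj'+e=md+j$ together with the ranges of $d,e,j$), but this is a cosmetic point and the argument goes through.
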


\begin{proof}
The proof is done by induction on $n=|(u,v)|$. 
\end{proof}

\begin{remark} \label{rem:unicitepremcomp}
It is easily checked that Remark~\ref{rem:unicitelettrepremcomp} extends from letters to words: for each $i \in [\![0,m{-}1]\!]$ and $v\in A_b^*$, there exist unique $u\in A_b^*$ and $j \in [\![0,m{-}1]\!]$ such that $\delta_{m,b}(i,(u,v))=j$. In particular, the word $u$ must have the same length as the word $v$, and hence $\val_b(u)<b^{|v|}$. 
\end{remark}

\begin{proposition} \label{prop:Ambproperties}
The automaton $\A_{m,b}$ is accessible, co-accessible and has disjoint states. 
\end{proposition}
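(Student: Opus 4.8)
The plan is to establish the three properties separately, each reducing to Lemma~\ref{lem:transitionsAmb}, which turns a path into the single arithmetic identity $b^{\ell}\, i+\val_b(v)=m\,\val_b(u)+j$ with $\ell=|(u,v)|$. Throughout I use that $0$ is simultaneously the initial and the unique final state of $\A_{m,b}$, so that the language $L_q$ accepted from a state $q$ is precisely the set of pairs $(u,v)$ with $\delta_{m,b}(q,(u,v))=0$.

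The disjointness of the states is the most immediate point. Suppose $(u,v)\in L_p\cap L_q$ and set $\ell=|(u,v)|$. Applying Lemma~\ref{lem:transitionsAmb} to both states gives $b^{\ell}p+\val_b(v)=m\,\val_b(u)=b^{\ell}q+\val_b(v)$, hence $b^{\ell}(p-q)=0$ and therefore $p=q$. So distinct states accept disjoint languages. For accessibility, to reach a state $j\in[\![0,m{-}1]\!]$ from $0$ I would read the pair $(0^{\ell},\,0^{\ell-|\rep_b(j)|}\rep_b(j))$ for any $\ell$ with $b^{\ell}>j$: by Lemma~\ref{lem:transitionsAmb} the reached state $j'$ satisfies $\val_b(v)=j=m\cdot 0+j'$, so $j'=j$. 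Thus every state is accessible.

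Co-accessibility is the step requiring genuine care. Fix a state $j$; I must exhibit $(u,v)$ with $\delta_{m,b}(j,(u,v))=0$, that is, with $m\,\val_b(u)=b^{\ell}j+\val_b(v)$ where $\ell=|(u,v)|$. I would choose $\ell$ large enough that $b^{\ell}\ge m$. Since the interval $[\![0,b^{\ell}{-}1]\!]$ then meets every residue class modulo $m$, there exists $v_0\in[\![0,m{-}1]\!]$ with $b^{\ell}j+v_0\equiv 0\pmod m$; put $n=(b^{\ell}j+v_0)/m$. The crucial length estimate is
\[
	b^{\ell}j+v_0\le (m-1)(b^{\ell}+1)=mb^{\ell}+(m-b^{\ell}-1)<mb^{\ell},
\]
using $b^{\ell}\ge m$, which forces $n<b^{\ell}$. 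Hence both $n$ and $v_0$ are values of words of length at most $\ell$; padding $\rep_b(n)$ and $\rep_b(v_0)$ with leading zeros to length $\ell$ produces $u$ and $v$ with $\delta_{m,b}(j,(u,v))=0$, so $L_j\neq\emptyset$.

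I expect the co-accessibility argument to be the main obstacle: the disjointness and accessibility claims follow in one line from Lemma~\ref{lem:transitionsAmb}, whereas co-accessibility hinges on two separate points, namely the solvability of the congruence $b^{\ell}j+\val_b(v)\equiv 0\pmod m$ for a suitable choice of length $\ell$, and the verification of the bound $n<b^{\ell}$ that guarantees $u$ is a legitimate word of length $\ell$. Both are elementary, but they are where the hypothesis $b^{\ell}\ge m$ must be invoked.
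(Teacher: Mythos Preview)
Your proof is correct and follows essentially the same approach as the paper: accessibility via reading $(0^\ell,\,0^{\ell-|\rep_b(j)|}\rep_b(j))$, disjointness by subtracting the two identities from Lemma~\ref{lem:transitionsAmb}, and co-accessibility by solving $b^\ell j+e\equiv 0\pmod m$ for a sufficiently long length~$\ell$. The only cosmetic difference is in the co-accessibility step: the paper phrases the solvability condition as the pair of inequalities $\lceil b^n i/m\rceil-b^n/m<b^n i/m\le b^n-1$ and observes it holds for large~$n$, whereas you take the canonical residue $v_0\in[\![0,m-1]\!]$ and bound the quotient directly via $(m-1)(b^\ell+1)<mb^\ell$; both amount to the same computation once $b^\ell\ge m$.
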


\begin{proof}
Let $i \in [\![0,m{-}1]\!]$. From Lemma~\ref{lem:transitionsAmb}, we have $\delta_{m,b}(0, \rep_b(0,i))=i$. Therefore $\A_{m,b}$ is accessible. In order to find a word $(u,v)$ of some length $n$ that leads from $i$ to $0$, we consider the equation $b^n\, i + e = m d$ together with the constraints that $0\le d,e<b^n$. We can show that for any fixed $n$, such $d,e$ exist if and only if $\left\lceil \frac{b^ni}{m} \right\rceil - \frac{b^n}{m} < \frac{b^ni}{m} \le b^n-1$. Now take any $n$ satisfying these inequalities (it is always possible by choosing $n$ large enough). Then the word $0^{n-|\rep_b(d,e)|}\rep_b(d,e)$ is accepted from $i$, showing that $\A_{m,b}$ is co-accessible. Finally, let $j \in [\![0,m{-}1]\!]$. By Lemma~\ref{lem:transitionsAmb}, if $(u,v) \in L_i\cap L_j$ then $b^{|(u,v)|} i + \val_b(v) = m\, \val_b(u)$ and $b^{|(u,v)|} j + \val_b(v) = m\,\val_b(u)$, which implies that $i=j$. Thus $i\ne j\implies L_i\cap L_j=\emptyset$, i.e.\ $\A_{m,b}$ has disjoint states.
\end{proof}

In a reduced DFA, there can be at most one non co-accessible state. Thus, we deduce from Proposition~\ref{prop:Ambproperties} that $\A_{m,b}$ is indeed the {\em trim minimal} automaton of the language $\val_b^{-1}\big(\{(n,mn)\colon n\in \N\}\big)$, that is the automaton obtained by removing the only non co-accessible state from its minimal automaton.

\section{The projected automaton $\Pi(\A_{m,b})$}
\label{sec:projection-m}
In this section, we study the automaton $\Pi(\A_{m,b})$ obtained by projecting the label of each transition of $\A_{m,b}$ onto its second component. For each $i,j\in[\![0,m{-}1]\!]$, there is a transition labeled by $e\in A_b$ from the state $i$ to the state $j$ if and only if $j=bi+e\bmod m$.   

As is well known, the automaton $\Pi(\A_{m,b})$ is not minimal: it is minimal if and only if $m$ and $b$ are coprime; see for example \cite{Alexeev:2004}. In fact, whenever $m$ and $b$ are coprime, we have a stronger property than minimality as shown in the following proposition. This result will be useful in our future considerations.

\begin{proposition}\label{prop:projAmbdisj}
The automaton $\Pi(\A_{m,b})$ is complete, accessible and co-accessible. Moreover, if $m$ and $b$ are coprime, then the automaton $\Pi(\A_{m,b})$ has disjoint states, and hence it is the minimal automaton of $\val^{-1}_b(m\N)$.
\end{proposition}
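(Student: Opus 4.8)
The plan is to base everything on an explicit description of the language $L_i$ accepted from each state $i$ of $\Pi(\A_{m,b})$. After projecting the label of each transition of $\A_{m,b}$ onto its second component, a word $v\in A_b^*$ leads from $i$ to $j$ in $\Pi(\A_{m,b})$ exactly when there is some $u\in A_b^*$ with $\delta_{m,b}(i,(u,v))=j$. By Lemma~\ref{lem:transitionsAmb}, this holds if and only if $b^{|v|}i+\val_b(v)\equiv j\pmod m$ with $0\le j<m$. Hence reading $v$ from $i$ always reaches the single state $(b^{|v|}i+\val_b(v))\bmod m$, and since $0$ is the only final state,
\[
    L_i=\{v\in A_b^*\colon b^{|v|}i+\val_b(v)\equiv 0 \pmod m\}.
\]
In particular $L_0=\val_b^{-1}(m\N)$, which I will use for the minimality claim.

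With this in hand, the first three properties are immediate. Completeness (and determinism) follows from Remark~\ref{rem:unicitelettrepremcomp}: for each state $i$ and each letter $e\in A_b$ there are unique $d$ and $j$ with $\delta_{m,b}(i,(d,e))=j$, so $i$ has exactly one outgoing transition labeled $e$ in $\Pi(\A_{m,b})$. For accessibility and co-accessibility I would simply transport paths, since any path of $\A_{m,b}$ between two states projects onto a path of $\Pi(\A_{m,b})$ between the same states; as $\A_{m,b}$ is accessible and co-accessible by Proposition~\ref{prop:Ambproperties}, so is $\Pi(\A_{m,b})$ (equivalently, each $L_i$ is nonempty).

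It remains to treat disjointness of the states when $\gcd(m,b)=1$, which is the heart of the statement. Suppose $v\in L_i\cap L_{i'}$ and set $n=|v|$. The two membership conditions give $b^n i+\val_b(v)\equiv 0$ and $b^n i'+\val_b(v)\equiv 0$ modulo $m$; subtracting yields $b^n(i-i')\equiv 0\pmod m$. When $\gcd(m,b)=1$ we have $\gcd(m,b^n)=1$, so $b^n$ is invertible modulo $m$ and $i\equiv i'\pmod m$, whence $i=i'$ since $0\le i,i'<m$. Thus distinct states accept disjoint languages. Being co-accessible with disjoint states, $\Pi(\A_{m,b})$ is reduced, and together with accessibility this makes it minimal; as it accepts $L_0=\val_b^{-1}(m\N)$, it is the minimal automaton of that language. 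The only step requiring a little care is the modular bookkeeping leading to the description of $L_i$; once that is secured, every claim of the proposition reduces to an elementary congruence argument and no genuine obstacle remains.
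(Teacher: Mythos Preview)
Your proof is correct and essentially parallel to the paper's, but with one pleasant difference in the disjointness step. The paper lifts a common word $v\in L_i\cap L_j$ back to pairs $(u,v)$ and $(u',v)$ in $\A_{m,b}$, obtains $b^{|v|}i-m\,\val_b(u)=b^{|v|}j-m\,\val_b(u')$, reduces modulo $b^{|v|}$ to cancel $m$, and deduces $\val_b(u)=\val_b(u')$ and then $i=j$. You instead first pin down $L_i=\{v\colon b^{|v|}i+\val_b(v)\equiv 0\pmod m\}$ and reduce modulo $m$, cancelling the invertible factor $b^{|v|}$ directly to get $i\equiv i'\pmod m$. The two arguments are dual (swap the roles of $m$ and $b^{|v|}$), but yours avoids the auxiliary words $u,u'$ and is a bit shorter; it also makes the completeness claim explicit, which the paper's proof leaves implicit.
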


\begin{proof}
The accessibility and co-accessibility of $\Pi(\A_{m,b})$ are straightforward consequences of Proposition~\ref{prop:Ambproperties}. Let $i,j\in[\![0,m{-}1]\!]$ and let $v \in A_b^*$ be a word accepted from both $i$ and $j$ in $\Pi(\A_{m,b})$. By Remark~\ref{rem:unicitepremcomp}, there exist unique words $u$ and $u'$ of the same length as $v$ such that $(u,v)$ and $(u',v)$ are accepted from $i$ and $j$ in $\A_{m,b}$ respectively. By Lemma~\ref{lem:transitionsAmb}, it is equivalent to say that $b^{|v|} i + \val_b (v) = m\, \val_b(u)$ and $b^{|v|} j + \val_b (v) = m\, \val_b (u')$.  Thus, we have
\begin{equation}\label{eqn:projAmbPropforte}
	b^{|v|} i - m\, \val_b (u) = b^{|v|} j -m\, \val_b( u').
\end{equation}
Therefore $m\, \val_b (u) \equiv m\, \val_b(u') \, \pmod{b^{|v|}}$. Because $m$ and $b$ are coprime, we obtain that $\val_b (u) \equiv \val_b(u') \, \pmod{b^{|v|}}$. Since $\val_b(u)$ and $\val_b (u')$ are both less than $b^{|v|}$, we obtain the equality $\val_b (u) =\val_b (u')$. Finally, we get from \eqref{eqn:projAmbPropforte} that $i=j$, which proves that $\Pi(\A_{m,b})$ has disjoint states.
\end{proof}

Let us prove some useful properties of $\Pi(\A_{m,b})$ under the more restrictive hypotheses of this work: $b=2^p$ and $m=k2^z$ with $k$ odd. 

\begin{lemma}
\label{lem:k>1}
If $k>1$ and $n=|\rep_{2^p}\big((k{-}1)2^z\big)|$, then $pn\ge z$.
\end{lemma}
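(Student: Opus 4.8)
The plan is to exploit the elementary relationship between the base-$2^p$ length of a positive integer and its magnitude. Recall that if a positive integer $N$ has $\rep_{2^p}(N)$ equal to a word of length $n$, then its leading digit is nonzero and consequently $N<(2^p)^n=2^{pn}$. I would simply apply this estimate to the number $N=(k{-}1)2^z$ and combine it with a matching lower bound.

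First I would record the lower bound coming from the hypothesis $k>1$. Since $k$ is odd and greater than $1$, we have $k\ge 3$, so in particular $k-1\ge 1$ and therefore $(k{-}1)2^z\ge 2^z$. Moreover $(k{-}1)2^z>0$, so its base-$2^p$ expansion is a nonempty word and the quantity $n=|\rep_{2^p}((k{-}1)2^z)|\ge 1$ is well defined, with a nonzero leading digit.

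Next I would use the upper bound. Because $\rep_{2^p}((k{-}1)2^z)$ has length $n$, we have $(k{-}1)2^z<(2^p)^n=2^{pn}$. Chaining the two bounds yields
\[
    2^z\le (k{-}1)2^z<2^{pn},
\]
hence $2^z<2^{pn}$, that is $z<pn$. In particular $pn\ge z$, which is the claim (and in fact one obtains the slightly stronger strict inequality $pn>z$).

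I do not expect any genuine obstacle here, as the statement reduces to a comparison of magnitudes. The only points requiring a moment's care are that the hypothesis $k>1$ (rather than merely $k\ge 1$) is exactly what guarantees that $(k{-}1)2^z$ is nonzero, so that $n\ge 1$ and the length-versus-magnitude estimate genuinely applies, and that one invokes the \emph{strict} upper bound $N<2^{pn}$ valid for every $n$-digit number in base $2^p$.
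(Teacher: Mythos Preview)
Your argument is correct. The paper itself omits the proof of this lemma (it is one of the results left without proof due to space constraints), presumably because it is exactly the elementary length-versus-magnitude comparison you give: from $k>1$ one gets $(k{-}1)2^z\ge 2^z>0$, and from $|\rep_{2^p}((k{-}1)2^z)|=n$ one gets $(k{-}1)2^z<2^{pn}$, whence $z<pn$.
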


For $k>1$ and $n=|\rep_{2^p}\big((k{-}1)2^z\big)|$, we let $\sigma$ be the permutation of the integers in $[\![0,k{-}1]\!]$ defined by $\sigma(j)=-j\,2^{pn-z}\bmod k$. Note that $\sigma$ permutes the integers $0,1,\ldots,k-1$ because $k$ is odd. For each $j\in [\![0,k{-}1]\!]$, we define $w_j$ to be the unique word of length $n$ representing $\sigma(j)2^z$ in base $2^p$:
\[
	w_j=0^{n-|\rep_{2^p}(\sigma(j)2^z)|}\rep_{2^p}(\sigma(j)2^z).
\]  
Note that the words $w_j$ are well defined since, by the choice of $n$, we have $\sigma(j)2^z\le (k{-}1)2^z<2^{pn}$ for every $j\in[\![0,k{-}1]\!]$.

\begin{proposition}
\label{prop:jj'}
Suppose that $k>1$ and let $j,j'\in[\![0,k{-}1]\!]$. Then the word $w_j$ is accepted from the state $j'$ in the automaton $\Pi(\A_{m,2^p})$ if and only if $j=j'$. 
\end{proposition}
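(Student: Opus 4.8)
The plan is to reduce the acceptance condition to a single congruence modulo $k$ and then exploit that $k$ is odd.

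First I would record how $\Pi(\A_{m,2^p})$ processes an input word. From the single-step rule recalled at the start of this section — an $e$-transition carries state $i$ to $bi+e \bmod m$ with $b=2^p$ — a routine induction on the length shows that reading a word $v$ of length $\ell$ from a state $i$ lands in the state $2^{p\ell}\, i + \val_{2^p}(v) \bmod m$. Since the unique final state of $\Pi(\A_{m,2^p})$ is $0$, and since $w_j$ has length $n$ with $\val_{2^p}(w_j)=\sigma(j)2^z$, the word $w_j$ is accepted from $j'$ if and only if
\[
	2^{pn}\, j' + \sigma(j)\,2^z \equiv 0 \pmod{m}.
\]

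Next I would simplify this congruence using $m=k2^z$ together with $pn\ge z$, which is exactly Lemma~\ref{lem:k>1}. Because $pn\ge z$, I can write $2^{pn}=2^z\,2^{pn-z}$, so the left-hand side equals $2^z\big(2^{pn-z} j' + \sigma(j)\big)$; dividing the congruence modulo $k2^z$ through by $2^z$ yields the equivalent condition $2^{pn-z} j' + \sigma(j) \equiv 0 \pmod{k}$. Substituting the defining relation $\sigma(j)\equiv -j\,2^{pn-z}\pmod{k}$ then turns this into $2^{pn-z}(j'-j)\equiv 0 \pmod{k}$. Finally, as $k$ is odd the factor $2^{pn-z}$ is invertible modulo $k$, so the condition reduces to $j\equiv j' \pmod{k}$, and since both $j$ and $j'$ lie in $[\![0,k{-}1]\!]$ this forces $j=j'$.

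Since every step above is an equivalence, the two implications of the proposition come out together. There is no serious obstacle beyond careful bookkeeping; the one point that genuinely requires the hypotheses is the descent from a congruence modulo $m=k2^z$ to one modulo $k$, which is legitimate only because $2^{pn}$ carries at least the factor $2^z$. This is precisely the role of Lemma~\ref{lem:k>1}: it guarantees $pn-z\ge 0$, so that $2^{pn-z}$ is an honest integer that is moreover invertible modulo the odd number $k$ (which is also what makes $\sigma$ a well-defined permutation).
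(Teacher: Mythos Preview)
Your proof is correct and follows essentially the same route as the paper: both arguments reduce acceptance of $w_j$ from $j'$ to the congruence $2^{pn}j'+\sigma(j)2^z\equiv 0\pmod{k2^z}$, then use $pn\ge z$ (Lemma~\ref{lem:k>1}) to strip the factor $2^z$, substitute the definition of $\sigma(j)$, and cancel the unit $2^{pn-z}$ modulo the odd integer $k$ to obtain $j\equiv j'\pmod k$. Your write-up is slightly more expansive in justifying the extended transition rule and the descent to a congruence modulo $k$, but the chain of equivalences is identical to the paper's.
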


\begin{proof}
Let $n=|\rep_{2^p}\big((k{-}1)2^z\big)|$. Then $|w_j|=n$ for all $j\in[\![0,k{-}1]\!]$ and from Lemma~\ref{lem:k>1}, we know that $pn\ge z$. We have
\begin{align*}
j' 2^{p|w_j|} +\val_{2^p}(w_j)\equiv 0\pmod m
		&\iff	j' 2^{pn} + \sigma(j)2^z\equiv 0\pmod{k2^z}\\
		&\iff 	j' 2^{pn-z} + \sigma(j)\equiv 0\pmod k \\
		&\iff 	j' 2^{pn-z} -j2^{pn-z} \equiv 0 \pmod k\\
		&\iff	j\equiv j' \pmod k \\
		&\iff	j=j'.
\end{align*}
\end{proof}

In a similar manner, we can prove the following result. 

\begin{proposition}
\label{prop:jj'-bis}
Suppose that $k>1$ and let $j,j'\in[\![0,k{-}1]\!]$. Then the word $w_j\rep_{2^p}(m)$ is accepted from the state $j'$ in the automaton $\Pi(\A_{m,2^p})$ if and only if $j=j'$. 
\end{proposition}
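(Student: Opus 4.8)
The plan is to mimic the computation carried out in the proof of Proposition~\ref{prop:jj'}, tracking the extra factor contributed by the suffix $\rep_{2^p}(m)$. I first observe that $w_j\rep_{2^p}(m)$ is accepted from the state $j'$ in $\Pi(\A_{m,2^p})$ if and only if
\[
	j'\,2^{p|w_j\rep_{2^p}(m)|}+\val_{2^p}\big(w_j\rep_{2^p}(m)\big)\equiv 0 \pmod m,
\]
which is the acceptance condition for a state in the projected automaton (namely, that the total value read reduces to $0$ modulo $m$). The key arithmetic fact is that for a concatenation, $\val_{2^p}(w_j\rep_{2^p}(m))=\val_{2^p}(w_j)\,2^{p|\rep_{2^p}(m)|}+\val_{2^p}(\rep_{2^p}(m))=\val_{2^p}(w_j)\,2^{p|\rep_{2^p}(m)|}+m$, and the appended $m$ vanishes modulo $m$. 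Writing $s=|\rep_{2^p}(m)|$ and $n=|w_j|$, the condition becomes
\[
	\big(j'\,2^{pn}+\val_{2^p}(w_j)\big)2^{ps}\equiv 0\pmod m.
\]

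Next I would exploit the structure $m=k2^z$ with $k$ odd. Since $\val_{2^p}(w_j)=\sigma(j)2^z$ by the definition of $w_j$, the bracketed quantity equals $j'2^{pn}+\sigma(j)2^z$. Reducing the whole displayed congruence modulo $k2^z$ and then factoring out $2^z$ (using $pn\ge z$ from Lemma~\ref{lem:k>1}, so that $2^{pn}$ carries at least the factor $2^z$), I expect the condition to collapse, exactly as in Proposition~\ref{prop:jj'}, to
\[
	\big(j'\,2^{pn-z}+\sigma(j)\big)2^{ps}\equiv 0\pmod k.
\]
Because $k$ is odd, the factor $2^{ps}$ is invertible modulo $k$ and can be cancelled, leaving $j'2^{pn-z}+\sigma(j)\equiv 0\pmod k$. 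Substituting $\sigma(j)=-j\,2^{pn-z}\bmod k$ gives $(j'-j)2^{pn-z}\equiv 0\pmod k$, and invertibility of $2^{pn-z}$ modulo $k$ then forces $j\equiv j'\pmod k$, i.e.\ $j=j'$ since both lie in $[\![0,k{-}1]\!]$.

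The only genuine subtlety, and the step I would treat most carefully, is the reduction modulo $k2^z$ followed by the division by $2^z$: one must check that the $2$-adic valuation of $j'2^{pn}+\sigma(j)2^z$ is at least $z$ so that cancelling $2^z$ is legitimate, which is precisely why Lemma~\ref{lem:k>1} is invoked to guarantee $pn\ge z$. Everything else is the same chain of equivalences as in the previous proposition, the single new ingredient being the invertible factor $2^{ps}$ coming from the suffix; since $k$ is odd this factor is a unit modulo $k$ and drops out without affecting the conclusion. Hence the phrase "in a similar manner" in the statement is justified, and I would write the proof as a short display of equivalences paralleling Proposition~\ref{prop:jj'}, inserting the cancellation of $2^{ps}$ as the one extra line.
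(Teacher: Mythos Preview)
Your argument is correct and is exactly the route the paper has in mind when it writes ``in a similar manner'': reduce to the congruence $j'2^{p|w_j\rep_{2^p}(m)|}+\val_{2^p}(w_j\rep_{2^p}(m))\equiv 0\pmod m$, use $\val_{2^p}(w_j\rep_{2^p}(m))=\sigma(j)2^{z+ps}+m$ to drop the additive $m$, factor $2^z$ (legitimate since $pn\ge z$), and cancel the unit $2^{ps}$ modulo the odd integer $k$ to land on the same equivalence chain as in Proposition~\ref{prop:jj'}. The only remark is that your ``genuine subtlety'' is not really one: once $pn\ge z$, the identity $j'2^{pn}+\sigma(j)2^z=2^z\big(j'2^{pn-z}+\sigma(j)\big)$ is immediate, and $k2^z\mid 2^zA\iff k\mid A$ requires no $2$-adic bookkeeping.
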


\section{The product automaton $\A_{m,2^p}\times \A_{\T,2^p}$}
\label{sec:product}

In this section, we study the product automaton $\A_{m,2^p} \times \A_{\T,2^p}$. The states of the product automaton are $(0,T), \ldots, (m{-}1,T)$ and  $(0,B) , \ldots , (m{-}1, B)$. The transitions of $\A_{m,2^p} \times \A_{\T,2^p}$ are defined as follows. For $i,j \in [\![0,m{-}1]\!]$, $X,Y \in \{T,B\}$ and $d,e \in A_{2^p}$, there is a transition labeled by $(d,e)$ from the state $(i,X)$ to the state $(j,Y)$ if and only if
\[
	2^p i + e = md + j \quad \andrm \quad 
	Y = 	X_d.
\]
We denote by $\delta_\times$ the (partial) transition function of this product automaton. The state $(0,T)$ is both initial and final, and there is no other final state.

From what precedes, namely Lemmas~\ref{lem:transitionsTM} and~\ref{lem:transitionsAmb}, Proposition~\ref{prop:Ambproperties} and the fact that $\A_{\T,2^p}$ has disjoint states, we obtain the following results.

\begin{lemma}
\label{lem:transitionsProd}
For all $i,j\in[\![0,m{-}1]\!]$, $X ,Y \in \{T,B\}$ and $(u,v)\in (A_{2^p}\times A_{2^p})^*$, we have
$\delta_\times((i,X),(u,v))=(j,Y)$ if and only if
\[
	2^{p\,|(u,v)|}\, i + \val_{2^p}(v) = m\, \val_{2^p}(u) + j 
	\quad \andrm \quad
	Y= X_{\val_{2^p}(u)}.
\]
\end{lemma}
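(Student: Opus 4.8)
The plan is to exploit the fact that $\A_{m,2^p}\times\A_{\T,2^p}$ is a product automaton, so that reading a word decomposes into the two coordinate computations, and then to invoke the two transition lemmas already established for the factors. In other words, I would reduce the statement to a coordinatewise claim and then substitute Lemmas~\ref{lem:transitionsAmb} and~\ref{lem:transitionsTM}.

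First I would record the componentwise behaviour of $\delta_\times$. By the definition of the product, a single letter $(d,e)$ induces the transition $\delta_\times((i,X),(d,e))=(j,Y)$ precisely when $2^p i+e=md+j$ and $Y=X_d$; the first condition is exactly $\delta_{m,2^p}(i,(d,e))=j$ and the second is exactly $\delta_{\T,2^p}(X,(d,e))=Y$. Hence on single letters $\delta_\times$ is the coordinatewise pairing of $\delta_{m,2^p}$ and $\delta_{\T,2^p}$. A routine induction on the length $|(u,v)|$, using compatibility of concatenation with the extended transition functions, then gives for every word $(u,v)\in(A_{2^p}\times A_{2^p})^*$ the factorization
\[
	\delta_\times((i,X),(u,v))=\bigl(\delta_{m,2^p}(i,(u,v)),\,\delta_{\T,2^p}(X,(u,v))\bigr).
\]

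Next I would translate each coordinate. For the first one, Lemma~\ref{lem:transitionsAmb} applied with $b=2^p$ states that $\delta_{m,2^p}(i,(u,v))=j$ is equivalent to $2^{p\,|(u,v)|}\,i+\val_{2^p}(v)=m\,\val_{2^p}(u)+j$. For the second one, Lemma~\ref{lem:transitionsTM} directly gives $\delta_{\T,2^p}(X,(u,v))=X_{\val_{2^p}(u)}$, so that $Y=X_{\val_{2^p}(u)}$. Combining these two equivalences with the coordinatewise factorization yields exactly the announced characterization. I do not expect a genuine obstacle here: the only point requiring any care is the induction establishing the factorization, and even that is immediate once the single-letter case is checked. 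Everything else is a direct substitution of the two earlier transition lemmas, which is precisely the reason the statement is advertised as following ``from what precedes''.
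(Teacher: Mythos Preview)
Your proposal is correct and follows exactly the route the paper indicates: the lemma is stated as a direct consequence of Lemmas~\ref{lem:transitionsTM} and~\ref{lem:transitionsAmb}, and your coordinatewise factorization of $\delta_\times$ followed by substitution of those two lemmas is precisely that. The only minor caveat is that $\delta_{m,2^p}$ is partial, so the factorization should be read with the understanding that both sides are simultaneously defined or undefined; this does not affect the argument.
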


\begin{corollary}
\label{cor:0Baccessible}
The word $\rep_{2^p}(1,m)$ is accepted from the state $(0,B)$ in $\A_{m,2^p} \times \A_{\T,2^p}$. In particular, the state $(0,B)$ is co-accessible in $\A_{m,2^p} \times \A_{\T,2^p}$.
\end{corollary}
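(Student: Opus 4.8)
The plan is to apply Lemma~\ref{lem:transitionsProd} directly to the single word $(u,v)=\rep_{2^p}(1,m)$, tracking separately what happens in each of the two components of the product automaton. First I would fix notation. Since $m\ge 1$ we have $|\rep_{2^p}(m)|\ge 1=|\rep_{2^p}(1)|$, so the common padding length is $\ell=|\rep_{2^p}(m)|$; thus $(u,v)$ is a word of length $\ell$ over $A_{2^p}\times A_{2^p}$ satisfying $\val_{2^p}(u)=1$ and $\val_{2^p}(v)=m$ (leading zeros do not affect the values).

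Next I would instantiate Lemma~\ref{lem:transitionsProd} at the starting state $(i,X)=(0,B)$, which reduces the question to two independent conditions. For the arithmetic (first) component, substituting $i=0$, $\val_{2^p}(u)=1$ and $\val_{2^p}(v)=m$ into the equation
\[
	2^{p\ell}\, i + \val_{2^p}(v) = m\,\val_{2^p}(u) + j
\]
gives $m = m + j$, hence $j=0$. For the Thue-Morse (second) component, the lemma prescribes $Y = B_{\val_{2^p}(u)} = B_1$. Here one recalls the definition of $X_n$: since the base-$2$ expansion of $1$ contains a single, hence an odd number of, occurrences of the digit $1$, we have $1\notin\T$, and therefore $B_1=\overline{B}=T$.

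Combining the two components, $(u,v)$ leads from $(0,B)$ to $(j,Y)=(0,T)$, which is precisely the unique final state of the product automaton. This establishes that $\rep_{2^p}(1,m)$ is accepted from $(0,B)$. The co-accessibility assertion is then immediate: a state is co-accessible exactly when the language it accepts is nonempty, and we have just exhibited an element of $L_{(0,B)}$.

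There is no genuine obstacle in this argument; it is a one-word verification once Lemma~\ref{lem:transitionsProd} is available. The only point requiring a moment's care is the second component, where one must remember that $1\notin\T$ so that reading the prefix-value $1$ flips the Thue-Morse state from $B$ to $T$ and thereby lands in the accepting state.
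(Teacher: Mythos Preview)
Your proof is correct and follows essentially the same route as the paper: the paper presents this corollary as an immediate consequence of Lemma~\ref{lem:transitionsProd} (together with the preceding lemmas), and you carry out precisely that verification by instantiating the lemma at $(i,X)=(0,B)$ with $(u,v)=\rep_{2^p}(1,m)$. The only minor remark is that your observation ``$|\rep_{2^p}(m)|\ge 1$'' should really read ``$|\rep_{2^p}(m)|\ge |\rep_{2^p}(1)|$'', which still holds since $m\ge 1$; the rest is fine.
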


\begin{lemma} \label{lem:proddisj} \ 
\begin{itemize}
\item For each $i \in [\![0,m{-}1]\!]$, the states $(i,T)$ et $(i,B)$ of the automaton $\A_{m,2^p} \times \A_{\T,2^p}$ are disjoint.
\item For distinct $i,j \in [\![0,m{-}1]\!]$ and for $X,Y \in\{T,B\}$, the states $(i,X)$ et $(j,Y)$ are disjoint in $\A_{m,2^p} \times \A_{\T,2^p}$.
\end{itemize}
\end{lemma}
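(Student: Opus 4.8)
The plan is to prove both statements using the characterization of transitions given in Lemma~\ref{lem:transitionsProd}, exploiting the fact that a word in the intersection of two accepted languages must satisfy the arithmetic and Thue-Morse conditions simultaneously. Recall that $L_{(i,X)}\cap L_{(j,Y)}\neq\emptyset$ means there is some $(u,v)$ leading from $(i,X)$ to the unique final state $(0,T)$ and also from $(j,Y)$ to $(0,T)$.

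For the first item, fix $i\in[\![0,m{-}1]\!]$ and suppose $(u,v)\in L_{(i,T)}\cap L_{(i,B)}$. Applying Lemma~\ref{lem:transitionsProd} with target $(0,T)$, the arithmetic condition $2^{p|(u,v)|}i+\val_{2^p}(v)=m\val_{2^p}(u)$ is identical for both source states, so no contradiction comes from it. The contradiction must come from the Thue-Morse component: accepting $(u,v)$ from $(i,T)$ requires $T_{\val_{2^p}(u)}=T$, i.e.\ $\val_{2^p}(u)\in\T$, whereas accepting from $(i,B)$ requires $B_{\val_{2^p}(u)}=T$, i.e.\ $\val_{2^p}(u)\notin\T$. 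These are mutually exclusive, so $L_{(i,T)}\cap L_{(i,B)}=\emptyset$. This directly reflects the fact that $\A_{\T,2^p}$ itself has disjoint states (the $T$-component cannot be both Thue-Morse and non-Thue-Morse after reading the same first coordinate $u$).

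For the second item, fix distinct $i,j\in[\![0,m{-}1]\!]$ and arbitrary $X,Y\in\{T,B\}$, and suppose $(u,v)\in L_{(i,X)}\cap L_{(j,Y)}$. Again routing to $(0,T)$, Lemma~\ref{lem:transitionsProd} yields both
\[
2^{p|(u,v)|}i+\val_{2^p}(v)=m\val_{2^p}(u)
\quad\andrm\quad
2^{p|(u,v)|}j+\val_{2^p}(v)=m\val_{2^p}(u).
\]
Subtracting gives $2^{p|(u,v)|}(i-j)=0$, forcing $i=j$, contrary to our assumption. Hence $L_{(i,X)}\cap L_{(j,Y)}=\emptyset$. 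This is essentially the disjointness argument already used in Proposition~\ref{prop:Ambproperties} for $\A_{m,2^p}$, now lifted to the product, and it works regardless of the Thue-Morse labels $X,Y$.

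I do not expect a genuine obstacle here: both claims reduce to reading off the two conditions in Lemma~\ref{lem:transitionsProd} and observing that disjointness is inherited component-wise, from $\A_{\T,2^p}$ in the first case and from $\A_{m,2^p}$ in the second. The only mild subtlety is that one must verify the unique final state is $(0,T)$ so that the same target is used for both source states when setting up the equations; this is given in the section's setup. Together, the two items establish that any two distinct states of $\A_{m,2^p}\times\A_{\T,2^p}$ accept disjoint languages, i.e.\ the product automaton has disjoint states.
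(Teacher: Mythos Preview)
Your proof is correct and follows essentially the same approach as the paper: the paper does not spell out a proof but indicates that the lemma follows from Lemmas~\ref{lem:transitionsTM} and~\ref{lem:transitionsAmb}, Proposition~\ref{prop:Ambproperties}, and the fact that $\A_{\T,2^p}$ has disjoint states, which is exactly what you unpack via Lemma~\ref{lem:transitionsProd}. Your argument for the first item is the disjointness of $\A_{\T,2^p}$ and for the second item is the disjointness of $\A_{m,2^p}$, lifted to the product through the explicit arithmetic/Thue-Morse conditions.
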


We are now ready to establish the main properties of the product automaton $\A_{m,2^p} \times \A_{\T,2^p}$.

\begin{proposition}
\label{prop:prodaccessible}
The automaton $\A_{m,2^p} \times \A_{\T,2^p}$ is complete, accessible, co-accessible and has disjoint states. In particular, it is the minimal automaton of the language $\val^{-1}_{2^p}\left(\{(t, mt)\colon t \in \T\}\right)$.
\end{proposition}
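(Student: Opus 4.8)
The plan is to establish the four claimed properties in turn, exploiting the results already proved about the two factor automata. The statement \emph{completeness} is the easiest: the automaton $\A_{\T,2^p}$ is complete (by the first lemma of Section~\ref{sec:aut-T}), and although $\A_{m,2^p}$ is \emph{not} complete, Remark~\ref{rem:unicitelettrepremcomp} tells us precisely that for each state $i$ and each second-component digit $e$ there is a \emph{unique} first-component digit $d$ and target $j$ with $\delta_{m,2^p}(i,(d,e))=j$. So I would first observe that the product reads labels $(d,e)$ and that the completeness question for the product should be reformulated after projection; more carefully, completeness of $\A_{m,2^p}\times\A_{\T,2^p}$ as stated refers to the alphabet $A_{2^p}\times A_{2^p}$, and for a given source state $(i,X)$ and label $(d,e)$ there is a transition precisely when $2^pi+e=md+j$ has a solution $j\in[\![0,m{-}1]\!]$, which is governed by $\A_{m,2^p}$ alone; so I would simply note that completeness of the product coincides with completeness of $\A_{m,2^p}$ on the relevant labels, and verify the correct reading of the statement against the conventions fixed in Section~\ref{sec:aut-m}.

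Next I would handle \emph{accessibility}. By Lemma~\ref{lem:transitionsProd}, reaching $(j,Y)$ from the initial state $(0,T)$ amounts to finding $(u,v)$ with $\val_{2^p}(v)=m\,\val_{2^p}(u)+j$ and $Y=T_{\val_{2^p}(u)}$. For the states $(j,T)$ this is immediate from accessibility of $\A_{m,b}$ in Proposition~\ref{prop:Ambproperties} combined with a choice of $u$ lying in $\T$; to reach $(j,B)$ we need $u\notin\T$, i.e.\ $\val_{2^p}(u)$ having an odd number of $1$'s in base $2$, which we can always arrange by adjusting a high-order digit of $u$ while compensating in $v$ (the one-step analogue is exactly Corollary~\ref{cor:0Baccessible}, which reaches $(0,B)$, and I would bootstrap from there). \emph{Co-accessibility} is the dual: $(0,T)$ is final, every $(i,T)$ and $(i,B)$ is co-accessible because $\A_{m,b}$ is co-accessible (Proposition~\ref{prop:Ambproperties}) and $\A_{\T,2^p}$ is co-accessible, and the $B$-states specifically are handled by Corollary~\ref{cor:0Baccessible} together with Lemma~\ref{lem:transitionsProd}; one needs to check that the word witnessing co-accessibility in $\A_{m,b}$ can be chosen so that its first component lands the $\T$-component in the accepting state $T$, which is a parity condition on $\val_{2^p}(u)$ that can be met by a harmless modification.

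The \emph{disjoint states} property is where essentially all the content sits, and it is already packaged: Lemma~\ref{lem:proddisj} asserts exactly that $(i,T),(i,B)$ are pairwise disjoint for each $i$, and that $(i,X),(j,Y)$ are disjoint whenever $i\neq j$. Together these two bullets cover every pair of distinct states, so I would simply invoke Lemma~\ref{lem:proddisj} to conclude that the product has disjoint states, remarking that the two bullets exhaust all cases (same first coordinate with different second coordinate; different first coordinates with arbitrary second coordinates). The underlying mechanism, which I would recall in one sentence, is that by Lemma~\ref{lem:transitionsProd} a word $(u,v)$ accepted from $(i,X)$ forces both $2^{p|(u,v)|}i+\val_{2^p}(v)=m\,\val_{2^p}(u)$ (pinning $i$ via the same argument as in Proposition~\ref{prop:Ambproperties}) and $X_{\val_{2^p}(u)}=T$ (pinning the parity, hence $X$), so no word can be accepted from two distinct states.

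Finally, to get \emph{minimality and the identification of the accepted language}, I would assemble the pieces: a co-accessible DFA with disjoint states is reduced (as noted at the end of Section~\ref{sec:basics}), and a reduced accessible DFA is minimal (the characterization recalled in Section~\ref{sec:basics}). Since we have just shown the product is accessible, co-accessible, and has disjoint states, it is minimal. To identify the language, I apply Lemma~\ref{lem:transitionsProd} with $(i,X)=(0,T)$ and accepting state $(0,T)$: a word $(u,v)$ is accepted iff $\val_{2^p}(v)=m\,\val_{2^p}(u)$ and $T_{\val_{2^p}(u)}=T$, i.e.\ iff $\val_{2^p}(u)\in\T$ and $\val_{2^p}(v)=m\,\val_{2^p}(u)$, which is exactly $\val^{-1}_{2^p}(\{(t,mt)\colon t\in\T\})$. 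The main obstacle I anticipate is purely bookkeeping rather than conceptual: making sure the parity adjustments used for accessibility and co-accessibility of the $B$-states do not disturb the arithmetic constraint on $(u,v)$, and reconciling the completeness claim with the fact that $\A_{m,2^p}$ is not complete over single letters — but Remark~\ref{rem:unicitelettrepremcomp} and Remark~\ref{rem:unicitepremcomp} resolve precisely this, since they fix the first component once the second is prescribed.
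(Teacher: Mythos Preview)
Your overall strategy matches the paper's: invoke Lemma~\ref{lem:proddisj} for disjoint states, combine co-accessibility of $\A_{m,2^p}$ with Corollary~\ref{cor:0Baccessible} for co-accessibility of the product, and conclude minimality from accessible $+$ reduced. Two points need tightening, however.

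For accessibility of the $B$-states you misread Corollary~\ref{cor:0Baccessible}: it says $\rep_{2^p}(1,m)$ is \emph{accepted from} $(0,B)$, i.e.\ it witnesses co-accessibility of $(0,B)$, not that $(0,B)$ is reachable from the initial state, so you cannot ``bootstrap from there'' for accessibility. The paper instead exhibits explicit witnesses via Lemma~\ref{lem:transitionsProd}: the word $\rep_{2^p}(0,i)$ reaches $(i,T)$ (since $0\in\T$) and $\rep_{2^p}(1,m+i)$ reaches $(i,B)$ (since $1\notin\T$ and $m\cdot 1+i=m+i$). Your ``adjust a high-order digit and compensate in $v$'' idea can be made to work, but you never carry it out. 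For co-accessibility, the paper's argument is cleaner than the ``harmless parity modification'' you allude to but do not specify: it reads \emph{any} word $(u,v)$ taking $i$ to $0$ in $\A_{m,2^p}$, notes that from $(i,X)$ this lands on $(0,T)$ or $(0,B)$, and in the latter case simply appends the word from Corollary~\ref{cor:0Baccessible}; no control on the parity of $\val_{2^p}(u)$ is needed in advance. (On completeness: your unease is justified, since the product inherits the incompleteness of $\A_{m,2^p}$; the paper's own proof is silent on this word as well.)
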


\begin{proof}
By using Lemma~\ref{lem:transitionsProd}, we can verify that for every $i\in[\![0,m{-}1]\!]$, the states $(i,T)$ and $(i,B)$ are accessible thanks to the words $\rep_{2^p}(0,i)$ and $\rep_{2^p}(1,m+i)$ respectively. Hence, $\A_{m,2^p} \times \A_{\T,2^p}$ is accessible. To show the co-accessibility, we now fix some $i\in[\![0,m{-}1]\!]$ and $X \in \{T,B \}$. By Proposition~\ref{prop:Ambproperties}, we already know that the automaton $\A_{m,2^p}$ is co-accessible. Therefore, we can find $(u,v)\in(A_{2^p}\times A_{2^p})^*$ such that there is a path labeled by $(u,v)$ from $i$ to $0$ in $\A_{m,2^p}$. Thus, by reading $(u,v)$ from the state $(i,X)$ in $\A_{m,2^p} \times \A_{\T,2^p}$, we reach either the state $(0,T)$ or the state $(0,B)$. If we reach $(0,T)$, then the state $(i,X)$ is co-accessible. If we reach $(0,B)$ instead, then we may apply Corollary~\ref{cor:0Baccessible} in order to obtain that $(i,X)$ is co-accessible as well. Finally, we deduce from Lemma~\ref{lem:proddisj} that $\A_{m,2^p} \times \A_{\T,2^p}$ has disjoint states.
\end{proof}

\section{The projection $\Pi \left( \A_{m,2^p} \times \A_{\T,2^p} \right)$ of the product automaton}
\label{sec:projected-product}

The aim of this section is to provide a DFA accepting the language $\val^{-1}_{2^p} \left( m\T \right)$. This automaton is denoted by $\Pi \left( \A_{m,2^p} \times \A_{\T,2^p} \right)$ and is defined from 
$\A_{m,2^p} \times \A_{\T,2^p}$ by only keeping the second component in the label of each transition. Formally, the states of $\Pi \left( \A_{m,2^p} \times \A_{\T,2^p} \right)$ are $(0,T), \ldots , (m{-}1, T)$ and $(0,B) , \ldots , (m{-}1, B)$,
the state $(0,T)$ is both initial and final and no other state is final, and the transitions are defined as follows. For $i,j \in [\![0,m{-}1]\!]$, $X,Y \in \{T,B\}$ and $e \in A_{2^p}$, there is a transition labeled by $e$ from the state $(i,X)$ to the state $(j,Y)$ if and only if there exists $d \in A_{2^p}$ such that
\[
	2^p i + e = md + j \quad \andrm \quad 
	Y = 	X_d.
\]

\begin{example}
The automaton $\Pi \left( \A_{6,4} \times \A_{\T,4} \right)$ is depicted in Figure~\ref{fig:projected-aut}. All edges labeled by $0$ ($1,2$ and $3$ respectively) are represented in black (blue, red and green respectively). The colors of the states will become clear in Section~\ref{sec:minimization} and, in particular, in Example~\ref{ex:classes}.
\begin{figure}[htb]
\centering
\begin{tikzpicture}[scale=0.5]
\tikzstyle{every node}=[shape=circle, fill=none, draw=black,
minimum size=30pt, inner sep=2pt,scale=0.7]
\node(0T) at (0,0) {$0T$};
\node[fill=cyan](1T) at (3.5,0) {$1T$};
\node[fill=gray](2T) at (7,0) {$2T$};
\node[fill=yellow](3T) at (10.5,0) {$3T$};
\node[fill=magenta](4T) at (14,0) {$4T$};
\node[fill=orange](5T) at (17.5,0) {$5T$};
\node[fill=yellow](0B) at (0,-4.5) {$0B$};
\node[fill=magenta](1B) at (3.5,-4.5) {$1B$};
\node[fill=orange](2B) at (7,-4.5) {$2B$};
\node[fill=violet](3B) at (10.5,-4.5) {$3B$};
\node[fill=cyan](4B) at (14,-4.5) {$4B$};
\node[fill=gray](5B) at (17.5,-4.5) {$5B$};
\tikzstyle{every node}=[shape=circle, fill=none, draw=black,
minimum size=25pt, inner sep=2pt,scale=0.7]
\node at (0,0) {};

\tikzstyle{etiquettedebut}=[very near start,rectangle,fill=black!20,scale=0.7]
\tikzstyle{etiquettemilieu}=[midway,rectangle,fill=black!20,scale=0.7]
\tikzstyle{every path}=[color=black, line width=0.5 pt]
\tikzstyle{every node}=[shape=circle, minimum size=5pt, inner sep=2pt]
\draw [->] (-1.5,0) to node {} (0T); 
\draw [->] (0T) to [loop above] node [scale=0.7] {$0$} (0T);
\draw [green,->] (5T) to [loop above] node [] {} (5T);
\draw [->] (0B) to [loop below] node [] {} (0B);
\draw [green,->] (5B) to [loop below] node [] {} (5B);
\draw [blue,->] (0T) to [] node [above=-0.05,scale=0.7] {$1$} (1T);
\draw [red,->] (0T) to [bend left=20] node [above=-0.057,scale=0.7] {$2$} (2T);
\draw [green,->] (0T) to [bend left=25] node [above,scale=0.7] {$3$} (3T);
\draw [->] (1T) to [bend left=20] node [] {} (4T);
\draw [blue,->] (1T) to [bend left=30] node [] {} (5T);
\draw [red,->] (1T) to [] node [] {} (0B);
\draw [green,->] (1T) to [bend left=15] node [] {} (1B);
\draw [->] (2T) to [bend left=15] node [] {} (2B);
\draw [blue,->] (2T) to [bend left=5] node [] {} (3B);
\draw [red,->] (2T) to [] node [] {} (4B);
\draw [green,->] (2T) to [] node [] {} (5B);
\draw [->] (3T) to [] node [] {} (0B);
\draw [blue,->] (3T) to [] node [] {} (1B);
\draw [red,->] (3T) to [bend left=5] node [] {} (2B);
\draw [green,->] (3T) to [bend left=15] node [] {} (3B);
\draw [->] (4T) to [bend left=15] node [] {} (4B);
\draw [blue,->] (4T) to [] node [] {} (5B);
\draw [red,->] (4T) to [bend right=35] node [] {} (0T);
\draw [green,->] (4T) to [bend right=25] node [] {} (1T);
\draw [->] (5T) to [bend right=25] node [] {} (2T);
\draw [blue,->] (5T) to [bend right=20] node [] {} (3T);
\draw [red,->] (5T) to [] node [] {} (4T);
\draw [blue,->] (0B) to [] node [] {} (1B);
\draw [red,->] (0B) to [bend right=20] node [] {} (2B);
\draw [green,->] (0B) to [bend right=25] node [] {} (3B);
\draw [->] (1B) to [bend right=20] node [] {} (4B);
\draw [blue,->] (1B) to [bend right=30] node [] {} (5B);
\draw [red,->] (1B) to [] node [] {} (0T);
\draw [green,->] (1B) to [bend left=15] node [] {} (1T);
\draw [->] (2B) to [bend left=15] node [] {} (2T);
\draw [blue,->] (2B) to [bend left=5] node [] {} (3T);
\draw [red,->] (2B) to [] node [] {} (4T);
\draw [green,->] (2B) to [] node [] {} (5T);
\draw [->] (3B) to [] node [] {} (0T);
\draw [blue,->] (3B) to [] node [] {} (1T);
\draw [red,->] (3B) to [bend left=5] node [] {} (2T);
\draw [green,->] (3B) to [bend left=15] node [] {} (3T);
\draw [->] (4B) to [bend left=15] node [] {} (4T);
\draw [blue,->] (4B) to [] node [] {} (5T);
\draw [red,->] (4B) to [bend left=35] node [] {} (0B);
\draw [green,->] (4B) to [bend left=25] node [] {} (1B);
\draw [->] (5B) to [bend left=25] node [] {} (2B);
\draw [blue,->] (5B) to [bend left=20] node [] {} (3B);
\draw [red,->] (5B) to [] node [] {} (4B);
\end{tikzpicture}
\caption{The projected automaton $\Pi \left( \A_{6,4} \times \A_{\T,4} \right)$.}
\label{fig:projected-aut}
\end{figure}
\end{example}

\begin{lemma}
\label{lem:projdisj}
For each $i \in [\![0,m{-}1]\!]$, the states $(i,T)$ and $(i,B)$ are disjoint. 
\end{lemma}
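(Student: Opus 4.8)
The plan is to characterize the words accepted from a given state of the projected automaton in terms of the product automaton, and then to exploit the uniqueness of the first component guaranteed by Remark~\ref{rem:unicitepremcomp}.

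First I would translate membership into the product automaton. By definition of the projection, a word $v\in A_{2^p}^*$ lies in $L_{(i,X)}$ in $\Pi(\A_{m,2^p}\times\A_{\T,2^p})$ if and only if there exists a word $u$ of the same length as $v$ such that $(u,v)$ is accepted from $(i,X)$ in $\A_{m,2^p}\times\A_{\T,2^p}$, that is, such that reading $(u,v)$ from $(i,X)$ leads to the unique final state $(0,T)$. By Lemma~\ref{lem:transitionsProd}, this amounts to the two conditions
\[
	2^{p|v|}\, i + \val_{2^p}(v) = m\,\val_{2^p}(u)
	\quad \andrm \quad
	X_{\val_{2^p}(u)} = T.
\]

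The key observation I would then highlight is that the first of these conditions involves only $i$ and $v$, and not the second component $X$. By Remark~\ref{rem:unicitepremcomp}, for fixed $i$ and $v$ there is a unique word $u$ (necessarily of length $|v|$) such that $\delta_{m,2^p}(i,(u,v))$ is defined, and it reaches a uniquely determined state of $\A_{m,2^p}$. Hence the word $u$ appearing above is entirely determined by $i$ and $v$; in particular it is the same whether we start from $(i,T)$ or from $(i,B)$. (If that unique $u$ does not lead the first component to $0$, then $v$ is accepted from neither state, and there is nothing to prove.)

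Finally I would reach a contradiction through the condition $X_{\val_{2^p}(u)}=T$. Suppose some $v$ belongs to both $L_{(i,T)}$ and $L_{(i,B)}$, and let $u$ be the common word supplied above. Acceptance from $(i,T)$ forces $T_{\val_{2^p}(u)}=T$, i.e.\ $\val_{2^p}(u)\in\T$, whereas acceptance from $(i,B)$ forces $B_{\val_{2^p}(u)}=T$, i.e.\ $\val_{2^p}(u)\notin\T$ (recall $\overline T=B$). These requirements are incompatible, so no such $v$ exists and $L_{(i,T)}\cap L_{(i,B)}=\emptyset$. The main obstacle is precisely to see that projecting onto the second component does \emph{not} merge these two languages: although projection can in general cause disjointness to fail, the uniqueness statement of Remark~\ref{rem:unicitepremcomp} pins down $u$ from $i$ and $v$ alone, so the entire discrepancy between the two states is concentrated in the membership of the single number $\val_{2^p}(u)$ in $\T$, which cannot hold and fail at once.
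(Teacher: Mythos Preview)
Your argument is correct and is essentially the paper's own proof, just with the citations unfolded one level: the paper invokes Remark~\ref{rem:unicitelettrepremcomp} (uniqueness of the first component) together with Lemma~\ref{lem:proddisj} (disjointness of $(i,T)$ and $(i,B)$ in the product automaton), while you use the word-version Remark~\ref{rem:unicitepremcomp} and re-derive that disjointness directly via Lemma~\ref{lem:transitionsProd}. The logical content is identical.
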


\begin{proof}
This follows from Remark~\ref{rem:unicitelettrepremcomp} and Lemma~\ref{lem:proddisj}. 
\end{proof}

\begin{proposition}
\label{prop:m-odd}
The automaton $\Pi \left( \A_{m,2^p} \times \A_{\T,2^p} \right)$ accepts $\val^{-1}_{2^p} \left( m \T \right)$, is deterministic, complete, accessible and co-accessible. Moreover, if $m$ is odd then it has disjoint states, and hence is minimal.
\end{proposition}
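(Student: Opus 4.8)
The plan is to establish the listed properties in order, reducing each one to facts already proved about the product automaton $\A_{m,2^p}\times\A_{\T,2^p}$. I would begin with the only genuinely new structural point, namely determinism and completeness, since projecting the labels of a DFA is a priori a nondeterminizing operation. Fix a state $(i,X)$ and a letter $e\in A_{2^p}$. By Remark~\ref{rem:unicitelettrepremcomp} there exist a unique $d\in A_{2^p}$ and a unique $j\in[\![0,m{-}1]\!]$ with $2^p i + e = md + j$; since $d$ is forced, so is the second component $Y=X_d$. Hence there is exactly one $e$-transition out of $(i,X)$, namely to $(j,X_d)$, which gives determinism and completeness simultaneously. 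A consequence worth recording is that $\Pi(\A_{m,2^p}\times\A_{\T,2^p})$ has the very same underlying transition graph as $\A_{m,2^p}\times\A_{\T,2^p}$, only the labels change; therefore accessibility and co-accessibility are inherited verbatim from Proposition~\ref{prop:prodaccessible}.

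Next I would identify the accepted language. A word $v$ is accepted iff there is a word $u$ of the same length with $(u,v)$ accepted by the product automaton, that is (by Proposition~\ref{prop:prodaccessible}) iff $\val_{2^p}(u)\in\T$ and $\val_{2^p}(v)=m\,\val_{2^p}(u)$. Such a $u$ exists precisely when $\val_{2^p}(v)/m$ is a nonnegative integer lying in $\T$, the length constraint being automatic because $\val_{2^p}(v)/m\le\val_{2^p}(v)<2^{p|v|}$. Thus the accepted language is exactly $\val^{-1}_{2^p}(m\T)$.

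Finally, assuming $m$ odd, I would prove the disjoint-states property in two cases. For a vertical pair $(i,T)$ and $(i,B)$, disjointness is exactly Lemma~\ref{lem:projdisj}. For $i\ne j$ and arbitrary $X,Y$, suppose $v\in L_{(i,X)}\cap L_{(j,Y)}$. By determinism each accepting run lifts to a unique pair in the product (Remark~\ref{rem:unicitepremcomp}), and applying Lemma~\ref{lem:transitionsProd} to the two lifts reaching the final state $(0,T)$ yields
\[
	2^{p|v|} i + \val_{2^p}(v) = m\,\val_{2^p}(u) \andrm 2^{p|v|} j + \val_{2^p}(v) = m\,\val_{2^p}(u'),
\]
whence $2^{p|v|}(i-j)=m\,\bigl(\val_{2^p}(u)-\val_{2^p}(u')\bigr)$. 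As $m$ is odd it is coprime to $2^{p|v|}$, so $m\mid(i-j)$ and therefore $i=j$, contradicting $i\ne j$. Hence all distinct states are disjoint. Combined with co-accessibility this shows the automaton is reduced, and together with accessibility it is minimal, by the criteria recalled in Section~\ref{sec:basics}.

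The conceptually delicate point, and the one I expect to be the main obstacle, is the determinism of the projection: in general projecting labels destroys determinism, and here it survives only because Remark~\ref{rem:unicitelettrepremcomp} guarantees that the first component of each transition is uniquely recoverable from the second. The disjoint-states computation for $i\ne j$ is the arithmetic heart of the argument, but it is a direct transcription of the coprimality reasoning already carried out in Proposition~\ref{prop:projAmbdisj}.
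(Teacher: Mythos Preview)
Your proof is correct and follows essentially the same approach as the paper: determinism/completeness via Remark~\ref{rem:unicitelettrepremcomp}, accessibility/co-accessibility inherited from the product, and disjointness split into the vertical case (Lemma~\ref{lem:projdisj}) and the $i\ne j$ case. The only cosmetic difference is that for $i\ne j$ the paper observes that $v\in L_{(i,X)}$ forces $v\in L_i$ in $\Pi(\A_{m,2^p})$ and then invokes Proposition~\ref{prop:projAmbdisj} directly, whereas you inline that proposition's coprimality argument; as you yourself note, the content is identical.
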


\begin{proof}
By construction, $\Pi \left( \A_{m,2^p} \times \A_{\T,2^p} \right)$ accepts $\val^{-1}_{2^p} \left( m \T \right)$; see Section~\ref{sec:method}. The fact that this automaton is deterministic and complete follows from Remark~\ref{rem:unicitelettrepremcomp}. It is accessible and co-accessible because so is $\A_{m,2^p} \times \A_{\T,2^p}$.  If a word $v$ over $A_{2^p}$ is accepted from some state $(i,X)$ in $\Pi \left( \A_{m,2^p} \times \A_{\T,2^p} \right)$,  then there exists a word $u$ over $A_{2^p}$ of length $|v|$ such that the word $(u,v)$ is accepted from $(i,X)$ in $\A_{m,2^p} \times \A_{\T,2^p}$. We deduce that $(u,v)$ is accepted from the state $i$ in $\A_{m,2^p}$ and in turn, that $v$ is accepted from the state $i$ in $\Pi \left( \A_{m,2^p} \right)$. Therefore, and by combining Proposition~\ref{prop:projAmbdisj} and Lemma~\ref{lem:projdisj}, we obtain  that if $m$ is odd then the automaton $\Pi \left( \A_{m,2^p} \times \A_{\T,2^p} \right)$ has disjoint states. It directly follows that $\Pi \left( \A_{m,2^p} \times \A_{\T,2^p} \right)$ is minimal if $m$ is odd.
\end{proof}

\begin{corollary}
\label{cor:m-odd}
If $m$ is odd, then the state complexity of $m \T $ with respect to the base $2^p$ is $2m$.
\end{corollary}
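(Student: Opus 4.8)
The plan is to read the result off directly from Proposition~\ref{prop:m-odd}. Recall that the state complexity of $m\T$ with respect to the base $2^p$ is, by definition, the number of states of the minimal automaton of the language $\val^{-1}_{2^p}(m\T)$. Proposition~\ref{prop:m-odd} already asserts that, when $m$ is odd, the automaton $\Pi\left(\A_{m,2^p}\times\A_{\T,2^p}\right)$ accepts $\val^{-1}_{2^p}(m\T)$ and is minimal. So the only remaining task is to count its states.

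First I would recall the state set of the product automaton constructed in Section~\ref{sec:product}. Its states are the pairs $(i,X)$ with $i\in[\![0,m{-}1]\!]$ and $X\in\{T,B\}$, and the projection operation defining $\Pi\left(\A_{m,2^p}\times\A_{\T,2^p}\right)$ leaves the state set untouched, since it only erases the first component of each transition label. Hence $\Pi\left(\A_{m,2^p}\times\A_{\T,2^p}\right)$ has exactly $2m$ states.

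Combining these two observations, the minimal automaton of $\val^{-1}_{2^p}(m\T)$ has $2m$ states when $m$ is odd, so the state complexity of $m\T$ with respect to the base $2^p$ equals $2m$, as claimed.

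There is essentially no obstacle to overcome at this stage: all of the work has already been absorbed into Proposition~\ref{prop:m-odd}, whose disjoint-states property (obtained by combining Proposition~\ref{prop:projAmbdisj} with Lemma~\ref{lem:projdisj}, and which crucially uses that an odd $m$ is coprime to $2^p$) is precisely what forces minimality. As a consistency check with the general formula, I would observe that Theorem~\ref{thm:main} specializes correctly: for odd $m$ we have $z=0$ and $k=m$, whence $2k+\left\lceil \frac zp\right\rceil=2m$.
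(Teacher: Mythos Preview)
Your proof is correct and matches the paper's intended argument: the corollary is stated without proof precisely because it follows immediately from Proposition~\ref{prop:m-odd} by counting the $2m$ states of $\Pi\left(\A_{m,2^p}\times\A_{\T,2^p}\right)$, exactly as you do. The consistency check with Theorem~\ref{thm:main} is also the same observation the paper makes just after the corollary.
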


Note that Corollary~\ref{cor:m-odd} and Theorem~\ref{thm:main} are consistent in the case where $m$ is odd, i.e.\  where $z=0$. However, we will see in Theorem~\ref{thm:main2} that the DFA $\Pi \left( \A_{m,2^p} \times \A_{\T,2^p} \right)$ is never minimal for even $m$. 


\section{Minimization of $\Pi \left( \A_{m,2^p} \times \A_{\T,2^p} \right)$}
\label{sec:minimization}

We start by defining some classes of states of $\Pi \left( \A_{m,2^p} \times \A_{\T,2^p} \right)$. Our aim is twofold. First, we will prove that those subsets consist in {\em indistinguishable} states, i.e.\ accepting the same language. Second, we will show that states belonging to different such subsets are {\em distinguishable}, i.e.\ accepts different languages. Otherwise stated, these classes correspond to the left quotients $w^{{-}1}L$ where $w$ is any word over the alphabet $A_{2^p}$ and $L=\val^{-1}_{2^p}(m\T)$.

\begin{definition}
For $(j,X) \in \big([\![1,k{-}1]\!]\times \{T,B\}\big)\cup \{(0,B)\}$, we define the {\em classes}
\[
		[(j,X)] = \{ ( j+k\ell, X_\ell) \colon \ell\in [\![0,2^z{-}1]\!]\} \quad\andrm\quad
[(0,T)] = \{ (0,T)\}.
\] 
\end{definition}

Note that these classes are pairwise disjoint: $[(j,X)]\cap[(j',X')]=\emptyset$ if $(j,X)\ne (j',X')$. If $m$ is odd, i.e.\ if $z =0$, then all these classes are reduced to a single state. If $m$ is a power of $2$, i.e.\  if $k=1$, then there is no class of the form $[(j,X)]$ with $j\ge 1$. 

\begin{definition}
For $\alpha \in [\![0,z{-}1]\!]$, we define the {\em pre-classes} 
\[
	C_\alpha = \{( k2^{z-\alpha - 1} + k 2^{z - \alpha} \ell , B_\ell) \colon  \ell\in [\![0,2^\alpha{-}1]\!]\}.
\]
Then, for $\beta \in [\![0,\lceil\frac zp \rceil{-}2]\!]$,  we define the {\em classes}
\[
	\Gamma_\beta = \bigcup_{\alpha = \beta p}^{\beta p+ p - 1} C_\alpha.
\]
In addition, we set
\[
	\Gamma_{\lceil \frac zp \rceil - 1} 
	= \bigcup_{\alpha = \left(\lceil \frac zp \rceil - 1 \right) p}^{z{-}1} C_{\alpha}.
\]
\end{definition}

\begin{remark}
Note that the classes $\Gamma_\beta$ are pairwise disjoint. If $m$ is odd, i.e.\ if $z =0$, then there is no such class $\Gamma_\beta$. 
\end{remark}

\begin{remark}
If a class $[(j,X)]$ or $\Gamma_\beta$ exists, then it is nonempty. Moreover, the classes $\Gamma_\beta$ together with the class $[(0,T)]$ form a partition of $ \{(k\ell,T_\ell) \colon \ell\in[\![0,2^z{-}1]\!]\}$. Therefore, the classes $[(j,X)]$ and $\Gamma_\beta$ form a partition of the set of states of $\Pi \left( \A_{m,2^p} \times \A_{\T,2^p} \right)$.
\end{remark}


\begin{example}
\label{ex:classes}
For $m=6$ and $p=2$, it is easily verified that the classes defined above correspond to states of the same color in the automaton $\Pi \left( \A_{6,4} \times \A_{\T,4} \right)$ of Figure~\ref{fig:projected-aut}. Let us make the explicit computations for $m=24$ and $p=2$: in this case, the classes defined above are
\begin{align*}
	[(0,T)] & =\{(0,T)\} \\
	[(1,T)] & =\{(1,T),(4,B),(7,B),(10,T),(13,B),(16,T),(19,T),(22,B)\} \\
	[(2,T)] & =\{(2,T),(5,B),(8,B),(11,T),(14,B),(17,T),(20,T),(23,B)\} \\
	[(0,B)] & =\{(0,B),(3,T),(6,T),(9,B),(12,T),(15,B),(18,B),(21,T)\} \\
	[(1,B)] & =\{(1,B),(4,T),(7,T),(10,B),(13,T),(16,B),(19,B),(22,T)\} \\
	[(2,B)] & =\{(2,B),(5,T),(8,T),(11,B),(14,T),(17,B),(20,B),(23,T)\} \\
	\Gamma_0 & =C_0\cup C_1= \{(12,B)\}\cup\{(6,B),(18,T)\}=\{(6,B),(12,B),(18,T)\} \\
	\Gamma_1 & =C_2= \{(3,B),(9,T),(15,T),(21,B)\}.
\end{align*}
In Figure~\ref{fig:projected-aut24}, the states of the automaton $\Pi \left( \A_{24,4} \times \A_{\T,4} \right)$ are colored with respect to these classes.
\begin{figure}
\centering
\begin{tikzpicture}
[>=triangle 60,scale=0.225]
\tikzstyle{every node}=[shape=circle, fill=none, draw=black,
minimum size=10pt, inner sep=2pt]
\node(0T) at (0,0) {};
\node[fill=cyan](1T) at (2,0) {};
\node[fill=gray](2T) at (4,0) {};
\node[fill=yellow](3T) at (6,0) {};
\node[fill=magenta](4T) at (8,0) {};
\node[fill=orange](5T) at (10,0) {};
\node[fill=yellow](6T) at (12,0) {};
\node[fill=magenta](7T) at (14,0) {};
\node[fill=orange](8T) at (16,0) {};
\node[fill=violet](9T) at (18,0) {};
\node[fill=cyan](10T) at (20,0) {};
\node[fill=gray](11T) at (22,0) {};
\node[fill=yellow](12T) at (24,0) {};
\node[fill=magenta](13T) at (26,0) {};
\node[fill=orange](14T) at (28,0) {};
\node[fill=violet](15T) at (30,0) {};
\node[fill=cyan](16T) at (32,0) {};
\node[fill=gray](17T) at (34,0) {};
\node[fill=vert](18T) at (36,0) {};
\node[fill=cyan](19T) at (38,0) {};
\node[fill=gray](20T) at (40,0) {};
\node[fill=yellow](21T) at (42,0) {};
\node[fill=magenta](22T) at (44,0) {};
\node[fill=orange](23T) at (46,0) {};

\node[fill=yellow](0B) at (0,-4.5) {};
\node[fill=magenta](1B) at (2,-4.5) {};
\node[fill=orange](2B) at (4,-4.5) {};
\node[fill=violet](3B) at (6,-4.5) {};
\node[fill=cyan](4B) at (8,-4.5) {};
\node[fill=gray](5B) at (10,-4.5) {};
\node[fill=vert](6B) at (12,-4.5) {};
\node[fill=cyan](7B) at (14,-4.5) {};
\node[fill=gray](8B) at (16,-4.5) {};
\node[fill=yellow](9B) at (18,-4.5) {};
\node[fill=magenta](10B) at (20,-4.5) {};
\node[fill=orange](11B) at (22,-4.5) {};
\node[fill=vert](12B) at (24,-4.5) {};
\node[fill=cyan](13B) at (26,-4.5) {};
\node[fill=gray](14B) at (28,-4.5) {};
\node[fill=yellow](15B) at (30,-4.5) {};
\node[fill=magenta](16B) at (32,-4.5) {};
\node[fill=orange](17B) at (34,-4.5) {};
\node[fill=yellow](18B) at (36,-4.5) {};
\node[fill=magenta](19B) at (38,-4.5) {};
\node[fill=orange](20B) at (40,-4.5) {};
\node[fill=violet](21B) at (42,-4.5) {};
\node[fill=cyan](22B) at (44,-4.5) {};
\node[fill=gray](23B) at (46,-4.5) {};

\tikzstyle{every node}=[shape=circle, fill=none, draw=black,
minimum size=7pt, inner sep=2pt]
\node at (0,0) {};
\end{tikzpicture}
\caption{The classes of the projected automaton $\Pi \left( \A_{24,4} \times \A_{\T,4} \right)$.}
\label{fig:projected-aut24}
\end{figure}
\end{example}

\subsection{States of the same class are indistinguishable}
\label{sec:reduction1}

For any two states $(j,X)$ and $(j',X')$ of the projected automaton $\Pi \left( \A_{m,2^p} \times \A_{\T,2^p} \right)$, the general procedure that we use for proving that $L_{(j,X)}\subseteq L_{(j',X')}$  goes as follows. Let $v\in L_{(j,X)}$ and let $n=|v|$. There exists a word $u$ over $A_{2^p}$ of length $|v|$ such that $(u,v)$ is accepted from the state $(j,X)$ in $\A_{m,2^p} \times \A_{\T,2^p}$ (before the projection). If $d=\val_{2^p}(u)$ and $e = \val_{2^p} (v)$, then, in view of Lemma~\ref{lem:transitionsProd}, we must have 
\[
	2^{pn}j+e = md \quad \andrm \quad X_d=T
\]
(the only final state of $\A_{m,2^p} \times \A_{\T,2^p}$ is $(0,T)$). Moreover, since $n=|v|$, we have $d,e\in[\![0,2^{pn}{-}1]\!]$. Now, in order to prove that $v\in L_{(j',X')}$, we have to find a word $u'$ over $A_{2^p}$ of length $n$ such that $(u',v)$ is accepted from $(j',X')$ in $\A_{m,2^p} \times \A_{\T,2^p}$. But then, we necessarily have that 
\[
	\val_{2^p}(u')=\frac{2^{pn}j'+e}{m}.
\] 
Let thus $d'=\frac{2^{pn}j'+e}{m}$. We obtain that $v\in L_{(j',X')}$ if and only if $d'\in[\![0,2^{pn}{-}1]\!]$ and $X'_{d'}=T$. Indeed, in this case, $|\rep_{2^p}(d')|\le n$ and thus, we can take the word $u'=0^{n-|\rep_{2^p}(d')|}\rep_{2^p}(d')$.
\medskip

We show that two states of the same class are indistinguishable. We give part of the proof of the first proposition only. 

\begin{proposition}
\label{prop:jX}
Let $(j,X) \in \big([\![1,k{-}1]\!]\times \{T,B\}\big)\cup\{(0,B)\}$ and let $\ell \in [\![1, 2^z{-}1]\!]$. Then $L_{( j,X)} = L_{( j+k\ell, X_\ell)}$.
\end{proposition}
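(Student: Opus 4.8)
The plan is to apply the general procedure described at the start of this section and to prove both inclusions $L_{(j,X)}\subseteq L_{(j+k\ell,X_\ell)}$ and $L_{(j+k\ell,X_\ell)}\subseteq L_{(j,X)}$. First I would record a clean membership criterion: by Remark~\ref{rem:unicitepremcomp} and Lemma~\ref{lem:transitionsProd}, a word $v$ of length $n$ with $e=\val_{2^p}(v)$ is accepted from a state $(i,X)$ if and only if the number $d(i):=\frac{2^{pn}i+e}{m}$ is an integer lying in $[\![0,2^{pn}{-}1]\!]$ and satisfies $X_{d(i)}=T$ (indeed, the unique candidate for $\val_{2^p}(u)$ is forced to be $d(i)$). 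The proposition then reduces to comparing the two conditions attached to $d(j)$ and to $d(j+k\ell)$, and the key algebraic observation is that
\[
	d(j+k\ell)=d(j)+2^{pn-z}\ell,
\]
since $m=k2^z$.

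Next I would establish the two arithmetic facts that make this relation usable, namely $pn>z$ and $0\le d(j)<2^{pn-z}$. For the size bound, when $j\in[\![1,k{-}1]\!]$ one has $d(j)<\frac{2^{pn}(j+1)}{m}=\frac{2^{pn-z}(j+1)}{k}\le 2^{pn-z}$ because $j+1\le k$; the case $(0,B)$ is treated the same way using $d(0)=e/m<2^{pn-z}/k$. To get $pn>z$, I would argue $2$-adically: writing $\nu_2$ for the $2$-adic valuation, membership forces $2^{pn}i+e=m\,d(i)$ with $\nu_2\big(m\,d(i)\big)\ge z$, while $\nu_2(2^{pn}i+e)=\nu_2(e)$ whenever $e\ne 0$; one checks that $e=0$ is impossible here (for $j\in[\![1,k{-}1]\!]$ it would force $k\mid j$, and in the $(0,B)$ case the only word with $e=0$ is $0^n$, which is rejected on both sides), so $\nu_2(e)\ge z$, whence $2^z\le e<2^{pn}$ and $pn>z$. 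I expect this step — securing $pn>z$ together with the range bound in \emph{both} inclusions (note that in the reverse inclusion the relevant state $(j+k\ell,X_\ell)$ has a large first coordinate, so its range and its valuation must be re-examined, using $d(j+k\ell)\ge \frac{2^{pn}k\ell}{m}=2^{pn-z}\ell$ to guarantee $d(j)\ge 0$) — to be the main obstacle.

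Finally, once $0\le d(j)<2^{pn-z}$ and $pn>z$ are known, the equality $d(j+k\ell)=d(j)+2^{pn-z}\ell$ is a carry-free superposition: the base-$2$ expansion of $d(j+k\ell)$ is the concatenation of that of $\ell$ with the length-$(pn-z)$ expansion of $d(j)$. Hence the parities of the digit sums add,
\[
	|\rep_2(d(j+k\ell))|_1\equiv|\rep_2(d(j))|_1+|\rep_2(\ell)|_1\pmod 2.
\]
Reading the definition of the subscript operation as $Y_n=Y$ exactly when $|\rep_2(n)|_1$ is even, this yields $(X_\ell)_{d(j+k\ell)}=X_{d(j)}$; moreover the size and nonnegativity bounds above show that $d(j+k\ell)\in[\![0,2^{pn}{-}1]\!]$ exactly when $d(j)$ is. Therefore the two membership criteria coincide, so $v\in L_{(j,X)}\iff v\in L_{(j+k\ell,X_\ell)}$, which proves $L_{(j,X)}=L_{(j+k\ell,X_\ell)}$.
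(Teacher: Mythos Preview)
Your proof is correct and follows the same overall architecture as the paper's: the membership criterion via Lemma~\ref{lem:transitionsProd} and Remark~\ref{rem:unicitepremcomp}, the key identity $d(j+k\ell)=d(j)+2^{pn-z}\ell$, the range bound $0\le d(j)<2^{pn-z}$, and the carry-free concatenation giving $(X_\ell)_{d(j+k\ell)}=X_{d(j)}$ are all exactly as in the paper.

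The one genuine difference is how you secure $pn>z$. The paper argues directly from the size bound in the forward direction (if $0<d<2^{pn-z}$ and $d$ is an integer then $2^{pn-z}>1$), and for the reverse inclusion runs a separate contradiction: assuming $pn\le z$, it sets $q=\lfloor \ell/2^{z-pn}\rfloor$ and squeezes $q<d'<q+1$. Your $2$-adic argument ($e\ne 0$ forces $\nu_2(e)\ge z$, hence $2^z\le e<2^{pn}$), together with the separate disposal of $e=0$, handles both inclusions at once and is cleaner. Either route works; yours has the advantage of being symmetric in the two directions, while the paper's avoids the case analysis on $e=0$ in the $(0,B)$ situation (which you do need to check, and you have).
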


\begin{proof}
We only give the proof for $j\ge 1$. The proof for $(0,B)$ can be adapted from this one.  Let $v\in A_{2^p}^*$, $n=|v|$, $e=\val_{2^p}(v)$, $d=\frac{2^{pn}j+e}{m}$ and $d'=\frac{2^{pn}(j+k\ell)+e}{m}$. We have to prove that $d\in[\![0,2^{pn}{-}1]\!]$ and $X_d=T$ if and only if $d'\in[\![0,2^{pn}{-}1]\!]$ and $(X_\ell)_{d'}=T$. 

Since $j\in[\![1,k{-}1[\!]$ and $e\in[\![0,2^{pn}{-}1]\!]$, we have
\begin{equation}
\label{eqn:jX}
	0< d=\frac{2^{pn}j+e}{m}<\frac{2^{pn}k}{m}=2^{pn-z}.
\end{equation}
Since $d'= d + \frac{2^{pn}k\ell}{m} =d+2^{pn-z}\ell$, it follows from~\eqref{eqn:jX} that if $d$ and $d'$ are both integers, then we must have
\[
	\rep_2(d')=\rep_2(\ell)0^{pn-z-|\rep_2(d)|}\rep_2(d).
\] 
Therefore, $d\in\T$ if and only if either $\ell\in\T\andrm d'\in\T$, or $\ell\notin\T\andrm d'\notin\T$, and hence  $X_d=(X_\ell)_{d'}$. 

Now, suppose that $d\in[\![0,2^{pn}{-}1]\!]$ and $X_d=T$. It follows from~\eqref{eqn:jX} that $pn> z$, for otherwise we would have $0<d<1$, which is not possible since $d$ is an integer. Therefore, we get that $d' = d +2^{pn-z}\ell$ is a positive integer. We also get from~\eqref{eqn:jX} that
\[
	d' = d+2^{pn-z}\ell<2^{pn-z} (\ell+1)\le  2^{pn}.
\]
Consequently, $d'\in[\![0,2^{pn}{-}1]\!]$ and $(X_\ell)_{d'}=X_d=T$. 

Conversely, suppose that $d'\in[\![0,2^{pn}{-}1]\!]$ and $(X_\ell)_{d'}=T$. In view of~\eqref{eqn:jX} and since $d= d'-2^{pn-z}\ell$, in order to obtain that $d\in[\![0,2^{pn}{-}1]\!]$, it is enough to show that $pn> z$. Proceed by contradiction and suppose that $pn \le z$. Let $q=\big\lfloor\frac{\ell}{2^{z-pn}}\big\rfloor$. On the one hand, since $j\ge 1$ and $e\ge 0$, we obtain
\[
	d'=\frac{2^{pn}(j+k\ell)+e}{m}>\frac{2^{pn}k\ell}{m}=\frac{\ell}{2^{z-pn}}\ge q.
\]
On the other hand, since $\ell\le (q+1)2^{z-pn}{-}1$, $e<2^{pn}$ and $j\le k{-}1$, we obtain
\[
	d' 	< \frac{2^{pn}(j+k(q+1)2^{z-pn}-k)+2^{pn}}{m} 
		= q+1+2^{pn}\frac{j-k+1}{m} 
		\le q+1.
\]
This is not possible since $d'$ is an integer, and hence $pn> z$. Consequently, $d\in[\![0,2^{pn}{-}1]\!]$ and $X_d=T$ as desired. 
\end{proof}

\begin{corollary}\label{cor:0B}
For each $(j,X)\in \big([\![1,k{-}1]\!]\times\{T,B\}\big)\cup\{(0,B)\}$, all states of the class $[(j,X)]$ are indistinguishable.
\end{corollary}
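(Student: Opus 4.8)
The plan is to derive this directly from Proposition~\ref{prop:jX}, which is the substantive result; the corollary is essentially a repackaging of it. First I would recall that two states being \emph{indistinguishable} means that they accept the same language, and that this is an equivalence relation on the states of $\Pi \left( \A_{m,2^p} \times \A_{\T,2^p} \right)$. Hence, to prove that all states of the class $[(j,X)]$ are pairwise indistinguishable, it suffices to exhibit a single representative whose accepted language coincides with that of every other member of the class. The natural choice is the state $(j,X)$ itself.

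Next I would unwind the definition of the class. By definition, $[(j,X)]=\{(j+k\ell,X_\ell)\colon \ell\in[\![0,2^z{-}1]\!]\}$. For the index $\ell=0$, since $0\in\T$ we have $X_0=X$, so this term is exactly the state $(j,X)$ and there is nothing to prove. For each remaining index $\ell\in[\![1,2^z{-}1]\!]$, Proposition~\ref{prop:jX} applies verbatim (its hypothesis is precisely $(j,X) \in \big([\![1,k{-}1]\!]\times \{T,B\}\big)\cup\{(0,B)\}$ and $\ell \in [\![1, 2^z{-}1]\!]$) and yields $L_{(j,X)}=L_{(j+k\ell,X_\ell)}$.

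Combining these two observations, every state of $[(j,X)]$ accepts exactly the language $L_{(j,X)}$. Consequently any two members of the class accept the same language and are therefore indistinguishable. I do not expect any genuine obstacle here: all the real work has already been carried out in Proposition~\ref{prop:jX}, and the only points deserving a word of justification are the harmless edge case $\ell=0$ (settled by $X_0=X$) and the fact that indistinguishability, being an equivalence relation, lets us argue through the single representative $(j,X)$ rather than comparing each pair of states of the class directly.
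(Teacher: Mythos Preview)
Your proposal is correct and matches the paper's intent: the corollary is stated without proof there, as it follows immediately from Proposition~\ref{prop:jX} exactly as you describe. Your handling of the trivial case $\ell=0$ via $X_0=X$ and the appeal to transitivity of indistinguishability are the only points needing mention, and you cover both.
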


Similarly, we can prove that two states of the same class of the form $\Gamma_\beta$ are indistinguishable.

\begin{proposition}\label{prop:gammabeta}
For all $\beta \in[\![0,\big\lceil \frac zp \big\rceil {-}1]\!]$, all states of the class $\Gamma_\beta$ are indistinguishable. 
\end{proposition}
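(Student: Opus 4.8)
The plan is to prove Proposition~\ref{prop:gammabeta} by reusing the general machinery set up at the beginning of Section~\ref{sec:reduction1} together with the method just deployed in the proof of Proposition~\ref{prop:jX}. Fix $\beta\in[\![0,\lceil\frac zp\rceil{-}1]\!]$. The class $\Gamma_\beta$ is the union of the pre-classes $C_\alpha$ for $\alpha$ ranging over a block of consecutive integers (namely $[\![\beta p,\beta p+p-1]\!]$, or $[\![(\lceil\frac zp\rceil-1)p,z{-}1]\!]$ for the last class). A typical state of $C_\alpha$ has the form $(k2^{z-\alpha-1}+k2^{z-\alpha}\ell,\, B_\ell)$ with $\ell\in[\![0,2^\alpha{-}1]\!]$. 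So first I would reduce the statement to showing that any two states $(i,X)$ and $(i',X')$ lying in $\Gamma_\beta$ accept the same language, and it suffices to show $L_{(i,X)}\subseteq L_{(i',X')}$ for an arbitrary ordered pair, since indistinguishability is symmetric.

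Following the recipe laid out before Proposition~\ref{prop:jX}, I would take $v\in L_{(i,X)}$, set $n=|v|$, $e=\val_{2^p}(v)$, and $d=\frac{2^{pn}i+e}{m}$, so that $d\in[\![0,2^{pn}{-}1]\!]$ is an integer with $X_d=T$. I then must show that $d'=\frac{2^{pn}i'+e}{m}$ is likewise an integer in $[\![0,2^{pn}{-}1]\!]$ with $X'_{d'}=T$. The key computation is to understand $d$ for a state of $C_\alpha$: writing $i=k2^{z-\alpha-1}+k2^{z-\alpha}\ell$, I get
\[
	d=\frac{2^{pn}(k2^{z-\alpha-1}+k2^{z-\alpha}\ell)+e}{m}
	=\frac{2^{pn}(2^{z-\alpha-1}+2^{z-\alpha}\ell)+e/k}{2^z}.
\]
The plan is to extract from this the key structural fact that the binary expansion of $d$ (when $d$ is an integer) has a controlled high-order block determined by $\ell$ and a low-order block coming from $e$, exactly as in the identity $\rep_2(d')=\rep_2(\ell)0^{\cdots}\rep_2(d)$ that drove the proof of Proposition~\ref{prop:jX}. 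The crucial point is that the integer $2^{z-\alpha-1}+2^{z-\alpha}\ell$ contributes a single isolated $1$ at position $z-\alpha-1$ together with the bits of $\ell$ shifted up by $z-\alpha$, so that the \emph{parity} of the number of ones in $\rep_2(d)$ — i.e.\ whether $d\in\T$ — decomposes as the parity contributed by $\ell$, plus one (from the isolated bit), plus the parity contributed by the low-order part of $d$.

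The main obstacle, and the heart of the argument, is precisely this parity bookkeeping across the different pre-classes $C_\alpha$ that are amalgamated into a single $\Gamma_\beta$. When passing from a state of $C_\alpha$ to a state of $C_{\alpha'}$ within the same $\Gamma_\beta$, the isolated bit moves from position $z-\alpha-1$ to position $z-\alpha'-1$ and the shift applied to $\ell$ changes accordingly, so I must verify that the net parity of ones in $\rep_2(d)$ versus $\rep_2(d')$ is preserved, and simultaneously that the second coordinate $X=B_\ell$ compensates correctly via the rule $\overline T=B$, $\overline B=T$. Concretely I expect to show that both the high-order contribution and the flip in the symbol $B_\ell$ conspire so that $X_d=T\iff X'_{d'}=T$. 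I would also have to handle the range condition, arguing as in Proposition~\ref{prop:jX} that $pn>z$ in the relevant regime so that $d'$ is genuinely a nonnegative integer strictly below $2^{pn}$; the grouping of the $C_\alpha$ into blocks of $p$ consecutive values of $\alpha$ is exactly what guarantees the shifts stay aligned modulo $p$ with the reading of digits in base $2^p$, which is why the classes $\Gamma_\beta$ are defined by such blocks. Once the parity and range verifications are in place, the symmetric inclusion follows identically, completing the proof.
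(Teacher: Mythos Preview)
Your plan matches the paper's intended argument (the paper only says ``similarly'' here and defers details to the arXiv version), and the parity bookkeeping you describe is exactly right. There is one slip worth flagging: the length condition you will obtain is \emph{not} $pn>z$. From $md=2^{pn}i+e$ with $i=k\,2^{z-\alpha-1}(2\ell+1)$ and $0\le e<2^{pn}$, one checks that $d$ can be an integer only when $pn\ge\alpha+1$ (and then necessarily $m\mid e$); the bound $pn>z$ can genuinely fail, e.g.\ for $k=1$, $z=3$, $p=2$, the word $0$ of length $1$ is accepted from $(4,B)\in C_0$ with $pn=2<3=z$. The correct chain is: $pn\ge\alpha+1$ and $\alpha\ge\beta p$ force $n\ge\beta+1$, hence $pn\ge(\beta+1)p\ge\alpha'+1$ for every $\alpha'$ in the same block, so $d'$ is an integer too. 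This is precisely the reason the $C_\alpha$ are grouped in blocks of $p$, as you correctly note. With $g=e/m<2^{pn-z}$ one then has $d=2^{pn-\alpha-1}(2\ell+1)+g$ with no binary overlap, and the parity computation collapses to the clean statement $X_d=T\iff g\in\T$, which is independent of $(\alpha,\ell)$; the same holds for $(\alpha',\ell')$, and the proposition follows.
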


\subsection{States of different classes are distinguishable}
\label{sec:reduction2}

In this section, we show that, in the projected automaton $\Pi \left( \A_{m,2^p} \times \A_{\T,2^p} \right)$, states from different classes $[(j,X)]$ or $\Gamma_\beta$ are pairwise distinguishable, that is, for any two such states, there exists a word which is accepted from exactly one of them.

First of all, note that the state $(0,T)$ is distinguished from all other states since it is the only final state: the empty word $\varepsilon$ is accepted from $(0,T)$ but not from any other state.

\begin{proposition}
Let $\beta,\gamma \in[\![0,\big\lceil \frac zp \big\rceil {-}1]\!]$ such that $\gamma > \beta$. The word $0^{\beta +1}$ is accepted from all states of $\Gamma_{\beta}$ but is not accepted from any state of $\Gamma_{\gamma}$.
\end{proposition}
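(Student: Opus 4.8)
The plan is to invoke the general acceptance criterion established at the beginning of Section~\ref{sec:reduction1}. For a word $v$ of length $n$ with $e=\val_{2^p}(v)$, the word $v$ is accepted from a state $(j,X)$ precisely when $d:=\frac{2^{pn}j+e}{m}$ is an integer lying in $[\![0,2^{pn}{-}1]\!]$ and $X_d=T$. Specializing to $v=0^{\beta+1}$ gives $n=\beta+1$ and $e=0$, so the whole question reduces to analyzing
\[
	d=\frac{2^{p(\beta+1)}j}{m}=\frac{2^{p(\beta+1)}j}{k2^z}
\]
for each relevant state $(j,X)$.

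First I would treat acceptance from $\Gamma_\beta$. Any state of $\Gamma_\beta$ lies in some pre-class $C_\alpha$ with $\beta p\le\alpha\le\beta p+p-1$ (or $\alpha\le z-1$ for the topmost class), and has the form $(j,B_\ell)$ with $j=k2^{z-\alpha-1}(1+2\ell)$ and $\ell\in[\![0,2^\alpha{-}1]\!]$. Substituting, the factor $k2^z$ cancels and
\[
	d=2^{p(\beta+1)-\alpha-1}(1+2\ell).
\]
Since $\alpha\le p(\beta+1)-1$ (this inequality holds for the last class too, because $p(\beta+1)=p\big\lceil\frac zp\big\rceil\ge z>\alpha$), the exponent is nonnegative and $d$ is a positive integer; and since $1+2\ell\le 2^{\alpha+1}{-}1$, we get $d<2^{p(\beta+1)}$, so $d\in[\![0,2^{p(\beta+1)}{-}1]\!]$. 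It then remains to check $(B_\ell)_d=T$. The key observation is that $\rep_2(1+2\ell)=\rep_2(\ell)\,{\tt 1}$, whence $|\rep_2(d)|_1=|\rep_2(\ell)|_1+1$; hence $d\in\T$ if and only if $\ell\notin\T$. Matching this against the two cases $B_\ell=T$ (when $\ell\notin\T$) and $B_\ell=B$ (when $\ell\in\T$) shows $(B_\ell)_d=T$ in both, so $0^{\beta+1}$ is accepted from every state of $\Gamma_\beta$.

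Next I would handle non-acceptance from $\Gamma_\gamma$ with $\gamma>\beta$. Here any state lies in some $C_{\alpha'}$ with $\alpha'\ge\gamma p\ge(\beta+1)p=p(\beta+1)$, so the same computation yields $d=2^{p(\beta+1)-\alpha'-1}(1+2\ell')$ with exponent $p(\beta+1)-\alpha'-1\le-1$. As $1+2\ell'$ is odd, $d$ fails to be an integer and the acceptance criterion is violated immediately; thus $0^{\beta+1}$ is rejected from every state of $\Gamma_\gamma$. The only genuinely delicate point is the parity bookkeeping in the $\Gamma_\beta$ case, namely reading off the binary digit sum of $d$ from that of $\ell$; the remainder is a careful tracking of exponents that separates the integral regime ($\alpha\le p(\beta+1)-1$) from the non-integral one ($\alpha'\ge p(\beta+1)$).
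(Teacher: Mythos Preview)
Your argument is correct and uses the same core computation as the paper: apply the acceptance criterion with $v=0^{\beta+1}$ and analyze $d=2^{p(\beta+1)-\alpha-1}(1+2\ell)$. The only difference is organizational: the paper invokes Proposition~\ref{prop:gammabeta} to reduce each $\Gamma_\gamma$ to the single representative $(k2^{z-\gamma p-1},B)$ (the case $\alpha=\gamma p$, $\ell=0$) and then checks that $0^{\beta+1}$ is accepted from that state iff $\gamma=\beta$, whereas you verify acceptance and non-acceptance for \emph{every} state of the relevant classes directly. Your route is slightly longer but self-contained, avoiding any dependence on the (unproved in the paper) indistinguishability result for the $\Gamma_\beta$ classes.
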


\begin{proof}
From Proposition~\ref{prop:gammabeta}, it suffices to show that $0^{\beta +1}$ is accepted from the state $( k2^{z - \gamma p {-}1},B)$ if and only if $\gamma=\beta$. This is an easy verification.
\end{proof}

\begin{proposition}
Let $(j,X) \in\big([\![1,k{-}1]\!]\times \{T,B\}\big)\cup\{(0,B)\}$ and $\beta \in[\![0,\big\lceil \frac zp \big\rceil {-}1]\!]$. The word $0^{\beta +1}$ is not accepted from any state of $[(j,X)]$.
\end{proposition}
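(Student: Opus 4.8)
```latex
The plan is to show that $0^{\beta+1}$ fails to be accepted from every state of the
class $[(j,X)]$. By Corollary~\ref{cor:0B}, all states of $[(j,X)]$ are
indistinguishable, so it suffices to verify the claim for a single, conveniently
chosen representative. The natural choice is the state $(j,X)$ itself, since it has the
smallest first component in its class and hence will give the cleanest inequalities.

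First I would set $n=\beta+1$ and read the word $v=0^{n}$, so $e=\val_{2^p}(0^n)=0$ and
$|v|=n$. Following the general procedure of Section~\ref{sec:reduction1}, the word $v$
is accepted from $(j,X)$ if and only if the quantity
\[
	d=\frac{2^{pn}j+e}{m}=\frac{2^{pn}j}{m}
\]
is an integer lying in $[\![0,2^{pn}{-}1]\!]$ with $X_d=T$. The key step is to examine
whether $d$ can be an integer in the admissible range. Since $m=k2^z$ and
$j\in[\![1,k{-}1]\!]$ (the case $(0,B)$ is handled separately below), we have
$d=\frac{2^{pn}j}{k2^z}=\frac{2^{pn-z}j}{k}$. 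Because $\gcd(j,k)$ may be nontrivial I would
not rely on $k\nmid j$ directly; instead the cleaner route is to reuse the bound
\eqref{eqn:jX} established in Proposition~\ref{prop:jX}, which with $e=0$ gives
$0<d<2^{pn-z}$.

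The crux of the argument is then a size obstruction on the exponent. With $n=\beta+1$ and
$\beta\le\lceil\frac zp\rceil-1$, one checks that $pn=p(\beta+1)\le p\lceil\frac zp\rceil$,
and I would compare this with $z$. The hard part will be pinning down exactly when
$pn\le z$ forces $0<d<1$, contradicting integrality of $d$: from \eqref{eqn:jX} we have
$0<d<2^{pn-z}$, and whenever $pn\le z$ this upper bound is at most $1$, so no integer $d$
exists and $v\notin L_{(j,X)}$. For the borderline case $pn>z$ (which can occur only at
$\beta=\lceil\frac zp\rceil-1$ when $p\nmid z$), I would argue separately that $d$ is
small enough that $\rep_2(d)$ has too few digits to place $(j,X)$ in the Thue-Morse
state $T$, or directly that $d$ fails the evil-number condition $X_d=T$ forced by the
class structure. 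Finally, the representative $(0,B)$ gives $d=0$ and $X=B$, whence
$X_0=B_0=B\ne T$, so $\varepsilon$-type acceptance fails immediately; the same computation
with $e=0$ then shows $0^{\beta+1}$ is likewise rejected from $(0,B)$. Combining these
cases with indistinguishability yields the proposition.
```
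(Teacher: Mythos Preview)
Your reduction to the representative $(j,X)$ via Corollary~\ref{cor:0B} is fine, and your treatment of the case $(0,B)$ is correct. The gap is in the borderline case $pn>z$, i.e.\ $\beta=\lceil z/p\rceil-1$ with $p\nmid z$. There you only say you ``would argue separately'' that $\rep_2(d)$ has too few digits or that $d$ fails $X_d=T$; neither of these ideas leads anywhere. The quantity $d=2^{pn-z}j/k$ can a priori be any integer in $[\![1,2^{pn-z}{-}1]\!]$, and there is no class-structure reason forcing $X_d\ne T$ (indeed both $X=T$ and $X=B$ occur). The obstruction is purely arithmetic: since $k$ is odd, $\gcd(k,2^{pn-z})=1$, so $d\in\N$ would force $k\mid j$, which is impossible because $0<j<k$. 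This is exactly the divisibility argument you explicitly set aside. Your worry that $\gcd(j,k)$ might be nontrivial is a red herring: what matters is $k\nmid j$, not $\gcd(j,k)=1$.

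The paper avoids your case split altogether by observing that there is a loop labeled $0$ on the unique final state $(0,T)$. Hence if $0^{\beta+1}$ were accepted from $(j,X)$ for some $\beta$, then so would $0^{\lceil z/p\rceil}$, and it suffices to rule out this single word. For $n=\lceil z/p\rceil$ one always has $pn\ge z$, so only the integrality argument above is needed (plus the trivial $(0,B)$ case). This is both shorter and sidesteps the boundary analysis that tripped you up.
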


\begin{proof}
Since there is a loop labeled by $0$ on the state $(0,T)$ and in view of Proposition~\ref{prop:jX}, it suffices to show that the word $0^{\lceil z/p\rceil}$ is not accepted from the state $(j,X)$. If $0^{\lceil z/p\rceil}$ were accepted from the state $(j,X)$, then we would get that
\[
	d=\frac{2^{p\left\lceil\frac zp\right\rceil}j}{m}=\frac{2^{p\left\lceil\frac zp\right\rceil-z}j}{k}
\]
is an integer such that $X_d=T$. If $j\ne 0$ then $d$ cannot be an integer since $k$ is odd and $0<j<k$. If $j=0$ then $X=X_0=X_d=T$, which contradicts the assumption that $(j,X)\ne (0,T)$. Hence the conclusion.
\end{proof}

\begin{proposition}
Suppose that $k>1$ and let $(j,X),(j',X')\in\big([\![1,k{-}1]\!]\times \{T,B\}\big)\cup\{(0,B)\}$ be distinct. 
The states $(j,X)$ and $(j',X')$ are distinguishable.
\end{proposition}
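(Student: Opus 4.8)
The plan is to exhibit, for any two distinct states $(j,X),(j',X')$ in $\big([\![1,k{-}1]\!]\times\{T,B\}\big)\cup\{(0,B)\}$, an explicit distinguishing word, and the natural candidates are the words $w_j$ built in Section~\ref{sec:projection-m} from the permutation $\sigma$, together with their extensions by $\rep_{2^p}(m)$. Recall that by Proposition~\ref{prop:jj'}, the word $w_j$ is accepted from the state $j'$ in the projection $\Pi(\A_{m,2^p})$ if and only if $j=j'$, and by Proposition~\ref{prop:jj'-bis} the same dichotomy holds for $w_j\rep_{2^p}(m)$. The first step is to transfer these statements to the product-projection $\Pi(\A_{m,2^p}\times\A_{\T,2^p})$: acceptance of a word $v$ from a state $(j,X)$ there requires not only that the numerical relation $2^{pn}j+\val_{2^p}(v)=md'$ with $d'\in[\![0,2^{pn}{-}1]\!]$ hold (which is exactly what governs acceptance in $\Pi(\A_{m,2^p})$), but additionally that $X_{d'}=T$, i.e.\ the top/bottom component must land in the accepting row.

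I would then split into cases according to whether the first coordinates $j$ and $j'$ differ. If $j\ne j'$, the indices already force a separation: by the transferred form of Proposition~\ref{prop:jj'}, a word of the shape $w_j$ (or $w_j\rep_{2^p}(m)$) satisfies the underlying numerical congruence from $(j,X)$ but not from $(j',X')$, so at most one of the two states can accept it. The remaining and genuinely interesting case is $j=j'$ with $X\ne X'$ (one state in the $T$-row, the other in the $B$-row, at the same column $j$). Here the numerical relation is satisfied simultaneously from both states, so the distinction must come entirely from the Thue-Morse condition $X_{d'}=T$ versus $X'_{d'}=T$. Since $X$ and $X'$ are opposite, for any word $v$ inducing a valid $d'$ we have $X_{d'}=T \iff X'_{d'}=B$, so exactly one of the two states accepts $v$: any single word of the form $w_j$ (whenever it produces a valid integer $d'$) distinguishes them. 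This is where I expect the main obstacle to lie, namely in checking that the chosen word genuinely yields $d'\in[\![0,2^{pn}{-}1]\!]$ so that acceptance is decided purely by the row, exactly as in the length bookkeeping of Proposition~\ref{prop:jj'} via Lemma~\ref{lem:k>1}.

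The bookkeeping that makes this work is precisely the content of Lemma~\ref{lem:k>1} and Remark~\ref{rem:unicitepremcomp}: with $n=|\rep_{2^p}((k{-}1)2^z)|$ we have $pn\ge z$, the words $w_j$ are well defined of length $n$, and the quotient $d'=\frac{2^{pn}j'+\val_{2^p}(w_j)}{m}$ is the unique candidate first coordinate. The whole argument reduces to the observation that the congruence part of acceptance is controlled by $\sigma$ (hence separates distinct columns $j\ne j'$) while the Thue-Morse part is controlled by the row symbol $X$ (hence separates $T$ from $B$ in a fixed column). Combining the two cases yields distinguishability of every pair of distinct states in $\big([\![1,k{-}1]\!]\times\{T,B\}\big)\cup\{(0,B)\}$, which is the assertion of the proposition; together with the preceding propositions separating these states from the $\Gamma_\beta$ classes and from $(0,T)$, this completes the separation of all classes.
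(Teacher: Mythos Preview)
Your strategy is the paper's strategy: split on whether $j=j'$, and in the case $j\ne j'$ use the words $w_j$ and $w_j\rep_{2^p}(m)$ together with Propositions~\ref{prop:jj'} and~\ref{prop:jj'-bis}. For $j=j'$ (hence $X\ne X'$) the paper does exactly what you describe, phrased via Lemma~\ref{lem:projdisj} (disjointness of $(j,T)$ and $(j,B)$) plus co-accessibility: any word accepted from column $j$ in $\Pi(\A_{m,2^p})$ is accepted from exactly one of the two rows.

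There is, however, a genuine gap in your $j\ne j'$ paragraph. You conclude that ``at most one of the two states can accept'' $w_j$ (resp.\ $w_j\rep_{2^p}(m)$). That is true---it just says $(j',X')$ accepts neither---but it is not distinguishability: you must also show that $(j,X)$ \emph{does} accept one of them. This is precisely the step the paper spells out and that your sketch elides. From column $j$ the numerical congruence for $w_j$ holds, so reading $w_j$ from $(j,X)$ lands in $(0,T)$ or $(0,B)$; in the first sub-case $w_j$ is your distinguishing word, but in the second it is accepted from neither state and you must switch to $w_j\rep_{2^p}(m)$. The reason this works is Corollary~\ref{cor:0Baccessible}: the pair $\rep_{2^p}(1,m)$ sends $(0,B)$ to $(0,T)$ (the first component has value $1\notin\T$, so it flips the row), hence $w_j\rep_{2^p}(m)$ is accepted from $(j,X)$, while Proposition~\ref{prop:jj'-bis} guarantees it is not accepted from $(j',X')$. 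Without making this sub-case split explicit, your argument does not yet produce a word accepted from $(j,X)$.
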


\begin{proof}
Suppose that $j=j'$. Then $X\ne X'$ by hypothesis and the states $(j,X)$ and $(j,X')$ are disjoint by Lemma~\ref{lem:projdisj}. Since $\Pi \left(\A_{m,2^p} \times \A_{\T,2^p}  \right)$ is co-accessible by Proposition~\ref{prop:m-odd}, we obtain that the states $(j,X)$ and $(j,X')$ are distinguishable.

Now suppose that $j\ne j'$. By Proposition~\ref{prop:jj'}, the word $w_j$ is accepted from $j$ in the automaton $\Pi(\A_{m,2^p})$ but is not accepted from $j'$. Then, there exists a word $u$ of length $|w_j|$ such that $(u,w_j)$ is accepted from $j$ in the automaton $\A_{m,2^p}$ but is not accepted from $j'$. Then, this word $(u,w_j)$ is accepted either from $(j,T)$ or from $(j,B)$ in the automaton $\A_{m,2^p}\times \A_{\T,2^p}$ but is not accepted neither from $(j',T)$ nor from $(j',B)$. Now, two cases are possible. 

First, suppose that $(u,w_j)$ is accepted from $(j,X)$ in $\A_{m,2^p}\times \A_{\T,2^p}$. Then, in the projected automaton $\Pi \left(\A_{m,2^p} \times \A_{\T,2^p}  \right)$, the word $w_j$ is accepted from $(j,X)$ but not from $(j',X')$. Thus, the word $w_j$ distinguishes the states $(j,X)$ and $(j',X')$. 

Second, suppose that $(u,w_j)$ is accepted from $(j,\overline{X})$ in $\A_{m,2^p}\times \A_{\T,2^p}$. Then there is a path labeled by $(u,w_j)$ from $(j,X)$ to $(0,B)$ in $\A_{m,2^p}\times \A_{\T,2^p}$. By Corollary~\ref{cor:0Baccessible}, in $\A_{m,2^p}\times \A_{\T,2^p}$, the word $\rep_{2^p}(1,m)$ is accepted from $(0,B)$, and hence the word $(u,w_j)\rep_{2^p}(1,m)=(u0^{|\rep_{2^p}(m)|{-}1}1,w_j\rep_{2^p}(m))$ is accepted from $(j,X)$. Therefore the word $w_j\rep_{2^p}(m)$ is accepted from the state $(j,X)$ in $\Pi \left(\A_{m,2^p} \times \A_{\T,2^p}  \right)$. Besides, the word $w_j\rep_{2^p}(m)$ cannot be accepted from $(j',X')$ in $\Pi \left(\A_{m,2^p} \times \A_{\T,2^p}  \right)$ for otherwise it would also be accepted from $j'$ in $\Pi \left(\A_{m,2^p} \right)$, which is impossible by Proposition~\ref{prop:jj'-bis}. Thus, the word $w_j\rep_{2^p}(m)$ distinguishes the states $(j,X)$ and $(j',X')$. 
\end{proof}

\subsection{The minimal automaton of $\val^{-1}_{2^p} (m \T)$.}
We are ready to construct the minimal automaton of $\val^{-1}_{2^p}(m\T)$. Since the states of $\Pi \left( \A_{m,2^p} \times \A_{\T,2^p} \right)$ that belong to the same class $[(j,X)]$ or $\Gamma_\beta$ are indistinguishable, they can be glued together in order to define a new automaton $\mathcal{M}_{m,\T,2^p}$ that still accepts the same language. Formally, the alphabet of  $\mathcal{M}_{m,\T,2^p}$ is $A_{2^p}$. Its states are the classes $[(j,X)]$ for $(j,X)\in[\![0,k{-}1]\!]\times \{T,B\}$ and the classes $\Gamma_\beta$ for $\beta\in[\![0,\lceil \frac zp\rceil{-}1]\!]$. The class $[(0,T)]$ is the initial state and the only final state. The transitions of $\mathcal{M}_{m,\T,2^p}$ are defined as follows: there is a transition labeled by a letter $a$ in $A_{2^p}$ from a class $J_1$ to a class $J_2$ if and only if there exists $j_1 \in J_1$ and $j_2 \in J_2$ such that, in the automaton $\Pi \left( \A_{m,2^p}\times \A_{\T,2^p} \right)$, there is a transition labeled by $a$ from the state $j_1$ to the state $j_2$. 

From what precedes, we obtain the following result.

\begin{theorem}
\label{thm:main2}
Let $p$ and $m$ be positive integers. The automaton $\mathcal{M}_{m,\T,2^p}$ is the minimal automaton of the language $\val^{-1}_{2^p} (m\T)$. 
\end{theorem}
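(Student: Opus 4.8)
The plan is to assemble Theorem~\ref{thm:main2} purely as a bookkeeping consequence of the work already done in Sections~\ref{sec:projected-product} and~\ref{sec:minimization}, since all the genuine content has been established in the preceding propositions. By Proposition~\ref{prop:m-odd}, the automaton $\Pi\left(\A_{m,2^p}\times\A_{\T,2^p}\right)$ accepts $\val^{-1}_{2^p}(m\T)$ and is deterministic, complete, accessible and co-accessible. The construction of $\mathcal{M}_{m,\T,2^p}$ is the quotient of this automaton by the partition into classes $[(j,X)]$ and $\Gamma_\beta$, so the core of the argument is to verify that this quotient is well defined as a DFA, that it still accepts the same language, and that it is minimal. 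First I would recall from the two final remarks of Section~\ref{sec:minimization} that the classes $[(j,X)]$ together with the classes $\Gamma_\beta$ form a partition of the state set of $\Pi\left(\A_{m,2^p}\times\A_{\T,2^p}\right)$, so that the quotient automaton $\mathcal{M}_{m,\T,2^p}$ is defined on a genuine partition and its initial and final state $[(0,T)]$ is a singleton class.

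Next I would argue that $\mathcal{M}_{m,\T,2^p}$ is a well-defined complete DFA accepting the same language. The key input here is that the partition is a \emph{congruence} for the transition structure: states in the same class are indistinguishable by Corollaries~\ref{cor:0B}, Proposition~\ref{prop:gammabeta} (and, for the singleton class, trivially). Indistinguishability in a deterministic automaton means precisely that the Myhill--Nerode equivalence refines into these classes; hence reading any letter $a$ from two states of the same class must lead into states of one and the same class, which makes the transition of $\mathcal{M}_{m,\T,2^p}$ well defined and deterministic, and completeness is inherited from Proposition~\ref{prop:m-odd}. Since indistinguishable states accept the same language, gluing them preserves the accepted language, so $\mathcal{M}_{m,\T,2^p}$ accepts $\val^{-1}_{2^p}(m\T)$.

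It then remains to prove minimality, which splits into accessibility and reducedness. Accessibility of $\mathcal{M}_{m,\T,2^p}$ is immediate from the accessibility of $\Pi\left(\A_{m,2^p}\times\A_{\T,2^p}\right)$ (Proposition~\ref{prop:m-odd}): any class contains an accessible state, and a path reaching that state projects to a path reaching its class. For reducedness I would invoke the results of Subsection~\ref{sec:reduction2}: those propositions establish that states lying in distinct classes are distinguishable, where the cases treated are $\Gamma_\beta$ versus $\Gamma_\gamma$, $[(j,X)]$ versus $\Gamma_\beta$, and $[(j,X)]$ versus $[(j',X')]$, together with the observation that $(0,T)$ is singled out as the unique final state. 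Since distinguishing words are preserved under the projection $\Pi\left(\A_{m,2^p}\times\A_{\T,2^p}\right)\to\mathcal{M}_{m,\T,2^p}$, distinct classes accept distinct languages, so $\mathcal{M}_{m,\T,2^p}$ is reduced. A reduced and accessible DFA is minimal by the characterization recalled in Section~\ref{sec:basics}, which completes the proof.

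The main obstacle, such as it is, is the verification that the classes genuinely form a congruence so that the transitions of $\mathcal{M}_{m,\T,2^p}$ are unambiguous; but this is not a new difficulty, since in a deterministic automaton indistinguishability is automatically a congruence (the Myhill--Nerode classes are exactly the blocks of the coarsest congruence saturated by finality), and the fact that the classes coincide with the indistinguishability classes is precisely the content already proven in Subsections~\ref{sec:reduction1} and~\ref{sec:reduction2}. Thus Theorem~\ref{thm:main2} is essentially a synthesis step, collecting the indistinguishability results (states of a common class), the distinguishability results (states of distinct classes), and the accessibility of the projected product automaton into the single conclusion that $\mathcal{M}_{m,\T,2^p}$ is the minimal automaton of $\val^{-1}_{2^p}(m\T)$.
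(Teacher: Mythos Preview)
Your proposal is correct and follows the same approach as the paper: the paper simply states ``From what precedes, we obtain the following result'' before Theorem~\ref{thm:main2}, treating it as an immediate consequence of the partition remark, the indistinguishability results of Subsection~\ref{sec:reduction1}, the distinguishability results of Subsection~\ref{sec:reduction2}, and the accessibility from Proposition~\ref{prop:m-odd}. Your write-up is a faithful unpacking of that one-line justification, with the congruence observation (indistinguishability in a DFA is automatically a congruence) being the only point you make explicit that the paper leaves implicit.
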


Note that Proposition~\ref{prop:m-odd} and Theorem~\ref{thm:main2} are consistent in the case where $m$ is odd, i.e.\ where $z=0$.

We are now ready to prove Theorem~\ref{thm:main}.

\begin{proof}[Proof of Theorem~\ref{thm:main}]
In view of Theorem~\ref{thm:main2}, it suffices to count the number of states of $\mathcal{M}_{m,\T,2^p}$. By definition, it has $2(k{-}1)+2=2k$ states of the form $[(j,X)]$ and $\lceil \frac{z}{p}\rceil$ states of the form $\Gamma_\beta$.
\end{proof}

As an application of this result, we obtain the following decision procedure.

\begin{corollary}
Given any $2^p$-recognizable set $Y$ (via a finite automaton $\A$ recognizing it), it is decidable whether $Y=m\T$ for some $m\in\N$. The decision procedure can be run in time $O(N^2)$ where $N$ is the number of states of the given automaton $\A$.
\end{corollary}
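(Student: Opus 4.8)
The plan is to instantiate the general decidability scheme described in the introduction, taking the family $\mathcal{L}=\{\val^{-1}_{2^p}(m\T)\colon m\in\N\}$ and using the exact formula of Theorem~\ref{thm:main} as the effective lower bound on its state complexities. First I would minimize the given automaton $\A$ to obtain the minimal automaton $\mathcal{B}$ of $Y$ together with its number of states $M$; since $\mathcal{B}$ is a minimization of $\A$, we have $M\le N$, and for the fixed base $2^p$ this minimization runs in time $O(N^2)$ (for instance by Moore's partition-refinement algorithm, whose cost is $O(N^2)$ once the alphabet size $2^p$ is treated as a constant).

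The key point is that two equal regular languages have the same state complexity. Hence, writing $m=k2^z$ with $k$ odd, Theorem~\ref{thm:main} forces $2k+\lceil z/p\rceil=M$ whenever $Y=m\T$ with $m\ge 1$. This single equation has only finitely many solutions, and only polynomially many: from $2k\le M$ the odd part $k$ ranges over the odd integers in $[\![1,\lfloor M/2\rfloor]\!]$, and for each such $k$ the identity $\lceil z/p\rceil = M-2k$ confines $z$ to a window of at most $p$ consecutive integers (or to $z=0$ when $M=2k$). Thus there are $O(M)=O(N)$ candidate pairs $(k,z)$, hence $O(N)$ candidate values $m=k2^z$; note that $m$ itself may be exponential in $N$, but $k$ and $z$ stay polynomially bounded, which is all we need. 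The degenerate value $m=0$, i.e.\ the question $Y=\{0\}$, is handled separately and trivially.

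For each candidate I would write down the minimal automaton $\mathcal{M}_{m,\T,2^p}$ of $\val^{-1}_{2^p}(m\T)$ supplied by Theorem~\ref{thm:main2} and test whether $m\T=Y$, i.e.\ whether $\mathcal{M}_{m,\T,2^p}$ and $\mathcal{B}$ are isomorphic. Here it is essential to build $\mathcal{M}_{m,\T,2^p}$ \emph{directly} from $(k,z,p)$ via the explicit class description of Section~\ref{sec:minimization} — its states are the classes $[(j,X)]$ and $\Gamma_\beta$ and its transitions are given by explicit arithmetic conditions — rather than by materializing and quotienting the exponentially large product $\Pi(\A_{m,2^p}\times\A_{\T,2^p})$. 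Each such automaton has $M=O(N)$ states; since both automata are minimal with the same number of states, the equivalence can be decided by checking for the (necessarily unique) isomorphism through a synchronized traversal from the two initial states, in time $O(2^pM)=O(N)$ for fixed $p$. Summing over the $O(N)$ candidates yields $O(N^2)$, which together with the initial $O(N^2)$ minimization gives the announced quadratic bound.

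The main obstacle is the complexity bookkeeping rather than correctness: decidability is immediate once the candidate set is shown to be finite, and the delicate points are precisely that the candidate set has size $O(N)$ — which is exactly what the shape of the formula $2k+\lceil z/p\rceil$ guarantees, the odd part $k$ being the dominant, linearly contributing term — and that each candidate automaton can be constructed and compared within $O(N)$ without ever unfolding the exponential-size product. One may additionally streamline the procedure by observing that the least positive element of $m\T$ equals $3m$, so that $m$, when it exists, is uniquely determined by $Y$; this collapses the candidate list to a single entry and only improves the constant factors.
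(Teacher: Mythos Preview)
Your proposal is correct and follows essentially the same approach as the paper: minimize the input automaton, use Theorem~\ref{thm:main} to restrict to $O(N)$ candidate values of $m=k2^z$ via the equation $2k+\lceil z/p\rceil=M$, and for each candidate compare directly against the minimal automaton $\mathcal{M}_{m,\T,2^p}$ of Theorem~\ref{thm:main2}. The only differences are cosmetic (Moore's $O(N^2)$ minimization instead of Hopcroft's $O(N\log N)$; an explicit isomorphism traversal instead of citing Hopcroft--Karp), and your additional remark that the least positive element $3m$ of $m\T$ pins down $m$ uniquely is a pleasant bonus not present in the paper.
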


\begin{proof}
Let $Y$ be a $2^p$-recognizable set given thanks to a finite (complete) automaton that accepts the languages of the $2^p$-expansions of its elements. Let $N$ be the number of states of this automaton. Then we can minimize and hence compute the state complexity $M$ of $Y$ (with respect to the base $2^p$) in time $O(N\log(N))$ \cite{Hopcroft:1971}. Let us decompose the possible multiples $m$ as $k2^z$ with $k$ odd. By Theorem~\ref{thm:main}, it is sufficient to test the equality between $Y$ and $m\T$ for the finitely many values of pairs $(k,z)$ such that $2k+\lceil\frac{z}{p}\rceil=M$. Since $M\le N$, the number of such tests is in $O(N)$. For each $m$ that has to be tested, we can directly use our description of the minimal automaton of $\val_{2^p}^{-1}(m\T)$ (this is Theorem~\ref{thm:main2}). This concludes the proof since the equality of two regular languages is decidable in linear time \cite{Hopcroft&Karp:2015}.
\end{proof}

%

\section{Conclusion and perspectives}
\label{sec:perspectives}

Our method is constructive and general: in principle, it may be applied to any $b$-recognizable set $X\subseteq\N$. However, in general, it is not the case that the product automaton $\A_{m,2^p} \times \A_{X,2^p}$ recognizing the bidimensional set $\{(n,mn)\colon n\in X\}$ is minimal. As an example, consider the $2$-recognizable set $X$ of powers of $2$: $X=\{2^n\colon n\in\N\}$. Then the product automaton $\A_{3,2} \times \A_{X,2}$ of our construction (for $m=3$ and $b=2$) has $6$ states but is clearly not minimal since it is easily checked that the automaton of Figure~\ref{fig:powers-2} is the minimal automaton recognizing the set $\{(2^n,3\cdot 2^n)\colon n\in  \N\}$.
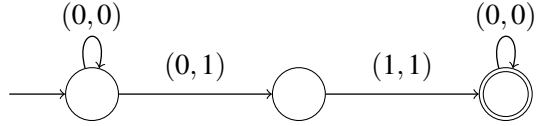
\begin{figure}
\centering
\begin{tikzpicture}[scale=0.55]
\tikzstyle{every node}=[shape=circle, fill=none, draw=black,minimum size=20pt]
\node(1) at (0,0) {};
\node(2) at (5,0) {};
\node(3) at (10,0) {};
\tikzstyle{every node}=[shape=circle, fill=none, draw=black,minimum size=17pt]
\node at (10,0) {};

\tikzstyle{every path}=[color=black, line width=0.5 pt]
\tikzstyle{every node}=[shape=circle, minimum size=5pt, inner sep=2pt]

\draw [->] (-2,0) to node {} (1); 
\draw [->] (1) to [loop above] node [above=-0.3cm] {$(0,0)$} (1);
\draw [->] (3) to [loop above] node [above=-0.3cm] {$(0,0)$} (3);
\draw [->] (1) to [] node [above=-0.2cm] {$(0,1)$} (2);
\draw [->] (2) to [] node [above=-0.2cm] {$(1,1)$} (3);
\end{tikzpicture}
\caption{Minimal automaton recognizing the set $\{(2^n,3\cdot 2^n)\colon n\in\N\}$.}
\label{fig:powers-2}
\end{figure}
This illustrates that, in general, the minimization procedure is not only needed in the final projection $\Pi \left( \A_{m,2^p} \times \A_{X,2^p} \right)$ as is the case in the present work.

Nevertheless, we conjecture that the phenomenon described in this work for the Thue-Morse set also appears for all $b$-recognizable sets of the form 
\[	
	X_{b,c,M,R}=\{n\in\N\colon |\rep_b(n)|_c\equiv R\bmod M\}
\]
where $b$ is an integer base, $c$ is any digit in $A_b$, $M$ is an integer greater than or equal to $2$ and $R$ is any possible remainder in $[\![0,M-1]\!]$. More precisely, we conjecture that whenever the base $b$ is a prime power, i.e.\ $b=q^p$ for some prime $q$, then the state complexity of $mX_{b,c,M,R}$ is given by the formula $Mk+\lceil \frac{z}{p}\rceil$ where $k$ is the part of the multiple $m$ that is prime to the base $b$, i.e.\ $m=kq^z$ with $\gcd(k,q)=1$.
 
We end by mentioning two other potential future research directions in the continuation of the present work. The first is to consider automata reading the expansions of numbers with least significant digit first. Both reading directions are relevant to different problems. For example, it is easier to compute addition thanks to an automaton reading expansions from “right to left” than from “left to right”. On the opposite, if we have in mind to generalize our problems to $b$-recognizable sets of real numbers (see for instance \cite{Boigelot&Rassart&Wolper:1998,Charlier:2018,Charlier&Leroy&Rigo:2015}), then the relevant reading direction is the one with most significant digit first. 
Further, there is no intrinsic reason why the state complexity from “left to right” should be the same as (or even close to) that obtained from “right to left”. 
The second related problem we want to investigate is the computation of the state complexity of the operation $X\mapsto m X+r$ where $r$ is not necessarily equal to $0$ as is the case in this work. We conjecture that the state complexity will be the same for all $r\in[\![0,m-1]\!]$.

\section{Acknowledgment}
Célia Cisternino is supported by the FNRS Research Fellow grant 1.A.564.19F.

\bibliographystyle{eptcs}
\bibliography{TMmultiples}
\label{sec:biblio}
\end{document}